\title{Increasing-Chord Graphs On Point Sets\thanks{Work partially supported by the Australian Research Council (grant DE140100708).}}
\date{}
\author{Hooman Reisi Dehkordi$^1$, Fabrizio Frati$^2$, Joachim Gudmundsson$^2$
\institute{
$^1$ School of Information Technologies -- Monash University\\
\email{hooman.dehkordi@monash.edu}\\
$^2$ School of Information Technologies -- The University of Sydney\\
\email{\{fabrizio.frati,joachim.gudmundsson\}@sydney.edu.au}}}
\newcommand{\remove}[1]{}
\renewenvironment{proof}
{{\bf Proof:}}{\hspace*{\fill}$\Box$\par\vspace{2mm}}
\begin{document}

\maketitle

\begin{abstract}
We tackle the problem of constructing increasing-chord graphs spanning point sets. We prove that, for every point set $P$ with $n$ points, there exists an increasing-chord planar graph with $O(n)$ Steiner points spanning $P$. Further, we prove that, for every convex point set $P$ with $n$ points, there exists an increasing-chord graph with $O(n \log n)$ edges (and with no Steiner points) spanning $P$.
\end{abstract}

\section{Introduction} \label{se:introduction}

A {\em proximity graph} is a geometric graph that can be constructed from a point set by connecting points that are ``close'', for some local or global definition of proximity. Proximity graphs constitute a topic of research in which the areas of graph drawing and computational geometry nicely intersect. A typical graph drawing question in this topic asks to characterize the graphs that can be represented as a certain type of proximity graphs. A typical computational geometry question asks to design an algorithm to construct a proximity graph spanning a given point set.

Euclidean minimum spanning trees and Delaunay triangulations are famous examples of proximity graphs. Given a point set $P$, a {\em Euclidean minimum spanning tree} (MST) of $P$ is a geometric tree with $P$ as vertex set and with minimum total edge length; the {\em Delaunay triangulation} of $P$ is a triangulation $T$ such that no point in $P$ lies inside the circumcircle of any triangle of $T$. From a computational geometry perspective, given a point set $P$ with $n$ points, an MST of $P$ with maximum degree five exists~\cite{MonmaS92} and can be constructed in $O(n \log n)$ time~\cite{bcko-cgta-08}; also, the Delaunay triangulation of $P$ exists and can be constructed in $O(n \log n)$ time~\cite{bcko-cgta-08}. From a graph drawing perspective, every tree with maximum degree five admits a representation as an MST~\cite{MonmaS92} and it is NP-hard to decide whether a tree with maximum degree six admits such a representation~\cite{EadesW96}; also, characterizing the class of graphs that can be represented as Delaunay triangulations is a deeply studied question, which still eludes a clear answer; see, e.g.,~\cite{dv-aptg-96,ds-gtcidr-96}. Refer to the excellent survey by Liotta~\cite{gd-handbook} for more on proximity graphs.

While proximity graphs have constituted a frequent topic of research in graph drawing and computational geometry, they gained a sudden peak in popularity even outside these communities in 2004, when Papadimitriou {\em et al.}~\cite{conf/mobicom/RaoPSS03} devised an elegant routing protocol that works effectively in all the networks that can be represented as a certain type of proximity graphs, called {\em greedy graphs}. For two points $p$ and $q$ in the plane, denote by $\overline{pq}$ the straight-line segment having $p$ and $q$ as end-points, and by $|\overline{pq}|$ the length of $\overline{pq}$. A geometric path $(v_1,\dots,v_n)$ is {\em greedy} if $|\overline{v_{i+1}v_n}|<|\overline{v_{i}v_n}|$, for every $1\leq i\leq n-1$. A geometric graph $G$ is {\em greedy} if, for every ordered pair of vertices $u$ and $v$, there exists a greedy path from $u$ to $v$ in $G$. A result related to our paper is that, for every point set $P$, the Delaunay triangulation of $P$ is a greedy graph~\cite{PapadimitriouR05}.

In this paper we study {\em self-approaching} and {\em increasing-chord graphs}, that are types of proximity graphs defined by Alamdari {\em et al.}~\cite{acglp-sag-12}. A geometric path ${\cal P}=(v_1,\dots,v_n)$ is {\em self-approaching} if, for every three points $a$, $b$, and $c$ in this order on ${\cal P}$ from $v_1$ to $v_n$ (possibly $a$, $b$, and $c$ are internal to segments of ${\cal P}$), it holds that $|\overline{bc}|<|\overline{ac}|$. A geometric graph $G$ is {\em self-approaching} if, for every ordered pair of vertices $u$ and $v$, $G$ contains a self-approaching path from $u$ to $v$; also, $G$ is {\em increasing-chord} if, for every pair of vertices $u$ and $v$, $G$ contains a path between $u$ and $v$ that is self-approaching both from $u$ to $v$ and from $v$ to $u$; thus, an increasing-chord graph is also self-approaching. The study of self-approaching and increasing-chord graphs is motivated by their relationship with greedy graphs (a self-approaching graph is also greedy), and by the fact that such graphs have a small geometric dilation, namely at most 5.3332~\cite{ikl-sac-99} (self-approaching graphs) and at most 2.094~\cite{r-cic-94} (increasing-chord graphs).

Alamdari {\em et al.} showed: (i) how to test in linear time whether a path in $\mathbb R^2$ is self-approaching; (ii) a characterization of the class of self-approaching trees; and (iii) how to construct, for every point set $P$ with $n$ points in $\mathbb R^2$, an increasing-chord graph that spans $P$ and uses $O(n)$ Steiner points.

In this paper we focus our attention on the problem of constructing increasing-chord graphs spanning given point sets in $\mathbb R^2$. We prove two main results.

\begin{itemize}
\item We show that, for every point set $P$ with $n$ points, there exists an increasing-chord planar graph with $O(n)$ Steiner points spanning $P$. This answers a question of Alamdari {\em et al.}~\cite{acglp-sag-12} and improves upon their result (iii) above, since our increasing-chord graphs are planar and contain increasing-chord paths between every pair of points, including the Steiner points (which is not the case for the graphs in~\cite{acglp-sag-12}). It is interesting that our result is achieved by studying Gabriel triangulations, which are proximity graphs strongly related to Delaunay triangulations (a Gabriel triangulation of a point set $P$ is a subgraph of the Delaunay triangulation of $P$). It has been proved in~\cite{acglp-sag-12} that Delaunay triangulations are not, in general, self-approaching.
\item We show that, for every convex point set $P$ with $n$ points, there exists an increasing-chord graph that spans $P$ and that has $O(n \log n)$ edges (and no Steiner points).
\end{itemize}

\section{Definitions and Preliminaries} \label{se:preliminaries}

A {\em geometric graph} $(P,S)$ consists of a point set $P$ in the plane and of a set $S$ of straight-line segments (called {\em edges}) between points in $P$. A geometric graph is {\em planar} if no two of its edges cross. A planar geometric graph partitions the plane into connected regions called {\em faces}. The bounded faces are {\em internal} and the unbounded face is the {\em outer face}. A geometric planar graph is a {\em triangulation} if every internal face is delimited by a triangle and the outer face is delimited by a convex polygon.

Let $p$, $q$, and $r$ be points in the plane. We denote by $\angle{pqr}$ the angle defined by a clockwise rotation around $q$ bringing $\overline{pq}$ to coincide with $\overline{qr}$.

A {\em convex combination} of a set of points $P=\{p_1,\dots,p_k\}$ is a point $\sum \alpha_i p_i$ where $\sum \alpha_i=1$ and $\alpha_i\geq 0$ for each $1\leq i\leq k$. The {\em convex hull} ${\cal H}_P$ of $P$ is the set of points that can be expressed as a convex combination of the points in $P$. A {\em convex point set} $P$ is such that no point is a convex combination of the others. Let $P$ be a convex point set and $\vec d$ be a directed straight line not orthogonal to any line through two points of $P$. Order the points in $P$ as their projections appear on $\vec d$; then, the {\em minimum point} and the {\em maximum point} of $P$ with respect to $\vec d$ are the first and the last point in such an ordering. We say that $P$ is  {\em one-sided with respect to $\vec d$} if the minimum and the maximum point of $P$ with respect to $\vec d$ are consecutive along the border of ${\cal H}_P$. See Fig.~\ref{fig:one-sided-x}. A {\em one-sided convex point set} is a convex point set that is one-sided with respect to some directed straight line $\vec d$. The proof of our first lemma shows an algorithm to construct an increasing-chord planar graph spanning a one-sided convex point set.

\begin{figure}[tb]
\begin{center}
\mbox{\includegraphics[scale=0.32]{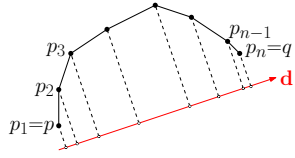}}
\caption{A convex point set that is one-sided with respect to a directed straight line $\vec d$.}
\label{fig:one-sided-x}
\end{center}
\end{figure}

\begin{lemma} \label{lemma:half_convex_planar_graph}
Let $P$ be any one-sided convex point set with $n$ points. There exists an increasing-chord planar graph spanning $P$ with $2n-3$ edges.
\end{lemma}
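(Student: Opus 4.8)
The plan is to produce the graph as a triangulation of the convex polygon bounded by $\mathcal H_P$, so that the bound $2n-3$ comes for free (a triangulation of a convex $n$-gon has $n$ boundary edges and $n-3$ diagonals, and it is planar). After a rotation — and replacing $\vec d$ by a generic direction still witnessing one-sidedness — we may assume $\vec d$ is horizontal, that $p_1$ and $p_n$ are the minimum and maximum points, that $\overline{p_1p_n}$ is an edge of $\mathcal H_P$, and that no two points of $P$ share an $x$- or a $y$-coordinate; then $p_1,\dots,p_n$ sorted by $x$ form an $x$-monotone convex chain $\mathcal C$ lying above $\overline{p_1p_n}$. I would triangulate $\mathcal H_P$ by the following top-down ear-cutting rule: while the current polygon has more than three vertices, let $q$ be the vertex of $\mathcal C$ still present with largest $y$-coordinate, add the diagonal joining the two current neighbours of $q$, and delete $q$ (keeping all chain edges and $\overline{p_1p_n}$). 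Equivalently, if $q_m$ is the apex of the current chain, cut the ear $q_{m-1}q_mq_{m+1}$ with the diagonal $\overline{q_{m-1}q_{m+1}}$ and recurse on the one-sided point set obtained by deleting $q_m$.

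The engine of the correctness proof is the elementary fact that a polygonal path all of whose edge vectors lie in a common closed quarter-plane (equivalently, a path monotone with respect to two orthogonal directions) is increasing-chord. Indeed, if $a,b,c$ occur in this order on such a path, then $b-a$ and $c-b$ lie in that quarter-plane, so $\langle b-a,\,c-b\rangle\ge 0$ and hence $|\overline{ac}|^2=|\overline{ab}|^2+2\langle b-a,c-b\rangle+|\overline{bc}|^2>|\overline{bc}|^2$; traversing the path in reverse replaces the quarter-plane by the opposite one, so the same inequality holds the other way. Thus it suffices to exhibit, for every ordered pair of points of $P$, a path in the triangulation whose edge vectors all lie in one quarter-plane.

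For this, let $q_m$ be the apex of $\mathcal C$. Since $q_m$ is the unique highest point and $\mathcal C$ is convex, the sub-chain from $p_1$ to $q_m$ is strictly $y$-increasing and the sub-chain from $q_m$ to $p_n$ is strictly $y$-decreasing. Hence for two points both weakly on the ascending side of $q_m$, the portion of $\mathcal C$ between them has every edge with $\Delta x>0$ and $\Delta y>0$, i.e.\ inside the quarter-plane spanned by $\vec d$ and the direction orthogonal to it; symmetrically for the descending side. So same-side pairs are routed along $\mathcal C$ itself. For a pair $u$ (ascending side) and $v$ (descending side) with, say, $y(u)\ge y(v)$ (the other case being symmetric), I would route ``downhill to the right'': the ear-cutting process eventually deletes $u$, and at that moment its right neighbour $R(u)$ is the first descending-side vertex with $y(R(u))\le y(u)$, so the edge $\overline{u\,R(u)}$ (a diagonal of the triangulation) has $\Delta x>0$ and $\Delta y\le 0$; from $R(u)$ one continues rightwards along the descending chain to $v$ (which lies to the right of $R(u)$ and is no higher), every such edge again having $\Delta x>0$, $\Delta y<0$. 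The whole path lies in one quarter-plane, so it is increasing-chord, and combining the three cases finishes the proof.

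The step I expect to be the crux is the last one: verifying that the ear-cutting triangulation genuinely contains, for every cross-apex pair, such a monotone ``downhill'' path. This requires a careful look at how the diagonals are created — in particular, that when a vertex is about to be deleted its current left/right neighbours are exactly the highest surviving vertices on either side with no greater $y$-coordinate, so that each created diagonal joins an ascending-chain vertex to a descending-chain vertex in a way that lets one ``step across the apex'' without ever increasing $y$ (and that the relevant edge $\overline{u\,R(u)}$ is indeed in the triangulation, being the diagonal added when the last vertex between $u$ and $R(u)$ was removed). One must also deal with ties in $y$-coordinates, which the genericity assumption on $\vec d$ removes. The one-sidedness of $P$ is used precisely here: it is what forces $\mathcal C$ to be a single-peaked convex chain over the single edge $\overline{p_1p_n}$, making each half $y$-monotone and the deletion order well behaved; for a general convex point set $\mathcal C$ is replaced by a region bounded by two nontrivial convex chains and this argument breaks down, which is consistent with the weaker $O(n\log n)$ bound obtained there.
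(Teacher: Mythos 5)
Your proposal is correct and is essentially the paper's own proof: the top-down ear-cutting triangulation you construct is exactly the graph produced by the paper's induction on the highest remaining point, and your explicit cross-apex routes (the diagonal $\overline{u\,R(u)}$ followed by a monotone portion of the chain) are precisely the unrolled form of the paper's observation that the current apex is joined to every other surviving point by an $x$- and $y$-monotone path along the current hull boundary, with the same $2n-3$ count via triangulating the convex polygon. The only real difference is cosmetic: you prove the ``edge vectors in a common closed quarter-plane implies increasing-chord'' fact directly via the dot-product identity, where the paper simply cites the monotone-path result of Alamdari et al.
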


\begin{proof}
Assume that $P$ is one-sided with respect to the positive $x$-axis ${\vec x}$. Such a condition can be met after a suitable rotation of the Cartesian axes. Let $\{p_1,p_2,\dots,p_n\}$ be the points in $P$, ordered as their projections appear on ${\vec x}$.

We show by induction on $n$ that an increasing-chord planar graph $G$ spanning $P$ exists, in which all the edges on the border of ${\cal H}_P$ are in $G$. If $n=2$ then the graph with a single edge $\overline{p_1p_2}$ is an increasing-chord planar graph spanning $P$. Next, assume that $n>2$ and let $p_j$ be a point with largest $y$-coordinate in $P$ (possibly $j=1$ or $j=n$). Point set $Q=P\setminus\{p_j\}$ is convex, one-sided with respect to ${\vec x}$, and has $n-1$ points. By induction, there exists an increasing-chord planar graph $G'$ spanning $Q$ in which all the edges on the border of ${\cal H}_{Q}$ are in $G'$. Let $G$ be the graph obtained by adding vertex $p_j$ and edges $\overline{p_{j-1}p_j}$ and $\overline{p_jp_{j+1}}$ to $G'$. We have that $G$ is planar, given that $G'$ is planar and that edges $\overline{p_{j-1}p_j}$ and $\overline{p_jp_{j+1}}$ are on the border of ${\cal H}_{P}$. Further, all the edges on the border of ${\cal H}_{P}$ are in $G$. Moreover, $G$ contains an increasing-chord path between every pair of points in $Q$, by induction; also, $G$ contains an increasing-chord path between $p_j$ and every point $p_i$ in $Q$, as one of the two paths on the border of ${\cal H}_{P}$ connecting $p_j$ and $p_i$ is both $x$- and $y$-monotone, and hence increasing-chord by the results in~\cite{acglp-sag-12}. Finally, $G$ is a maximal outerplanar graph, hence it has $2n-3$ edges.
\end{proof}

The {\em Gabriel graph} of a point set $P$ is the geometric graph that has an edge $\overline{pq}$ between two points $p$ and $q$ if and only if the closed disk whose diameter is $\overline{pq}$ contains no point of $P\setminus\{p,q\}$ in its interior or on its boundary. A {\em Gabriel triangulation} is a triangulation that is the Gabriel graph of its point set $P$. We say that a point set $P$ {\em admits} a Gabriel triangulation if the Gabriel graph of $P$ is a triangulation. A triangulation is a Gabriel triangulation if and only if every angle of a triangle delimiting an internal face is acute~\cite{gs-nsagva-69}. See~\cite{gs-nsagva-69,gd-handbook,ms-pgg-80} for more properties about Gabriel graphs.

In Section~\ref{se:steiner} we will prove that every Gabriel triangulation is increasing-chord. A weaker version of the converse is also true, as proved in the following.

\begin{lemma} \label{lemma:gabriel_edge_increasing_chord}
Let $P$ be a set of points and let $G(P,S)$ be an increasing-chord graph spanning $P$. Then all the edges of the Gabriel graph of $P$ are in $S$.
\end{lemma}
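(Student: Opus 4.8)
The plan is to prove the contrapositive in spirit: I will take an arbitrary edge $\overline{pq}$ of the Gabriel graph of $P$ and show that, if $\overline{pq}\notin S$, then $G$ cannot contain an increasing-chord path from $p$ to $q$, contradicting the hypothesis. The key geometric fact I would rely on is the characterization of self-approaching paths in terms of ``normal'' half-planes: a path $\mathcal P$ from $u$ to $v$ is self-approaching if and only if, for every point $x$ on $\mathcal P$, the part of $\mathcal P$ from $x$ to $v$ lies entirely in the closed half-plane that contains $v$ and is bounded by the line through $x$ orthogonal to the direction of $\mathcal P$ at $x$. An increasing-chord path is self-approaching in both directions, so it must satisfy this condition read from $p$ and also read from $q$.

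First I would set up the configuration. Since $\overline{pq}$ is a Gabriel edge, the closed disk $D$ with diameter $\overline{pq}$ contains no point of $P\setminus\{p,q\}$; equivalently, for every other point $r\in P$ the angle $\angle{prq}$ is strictly less than $\pi/2$ — i.e. $r$ lies strictly outside $D$. Let $\ell_p$ and $\ell_q$ be the two lines through $p$ and through $q$ respectively that are orthogonal to $\overline{pq}$; these are exactly the two lines tangent to $D$ at its ``poles'', and the closed slab $\Sigma$ between them is precisely the set of points $x$ with $\angle{pxq}\le\pi/2$ together with the region ``between'' $p$ and $q$ in the $\overline{pq}$-direction. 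The crucial observation is that $D\subseteq\Sigma$, and, more importantly, every point of $P$ other than $p$ and $q$ lies strictly inside the open slab bounded by $\ell_p$ and $\ell_q$ — wait, that is not what I want; rather, I want to isolate $p$ and $q$ as the unique extreme points of $P$ in the $\overline{pq}$-direction within $D$. Let me instead argue directly about a path.

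Suppose $\mathcal Q=(p=w_1,\dots,w_k=q)$ is an increasing-chord path in $G$ with $k\ge 3$ (if $k=2$ the edge is $\overline{pq}\in S$ and we are done). Read the self-approaching condition from $p$ to $q$ at the first point $w_2$: the subpath from $w_2$ to $q$, and in particular $q$ itself, must lie in the closed half-plane bounded by the line through $w_2$ orthogonal to $\overline{w_1w_2}=\overline{pw_2}$ on the side of $q$; this forces $\angle{p w_2 q}\le \pi/2$, hence $w_2$ lies in the closed disk $D$. Symmetrically, reading the self-approaching condition from $q$ to $p$ at the point $w_{k-1}$ forces $\angle{q w_{k-1} p}\le \pi/2$, so $w_{k-1}\in D$. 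But $w_2$ and $w_{k-1}$ are vertices of $G$, hence points of $P$, and $D$ contains no point of $P$ other than $p$ and $q$; therefore $w_2=q$ or $w_2=p$, and likewise for $w_{k-1}$. Since the path is simple and $k\ge 3$, the only possibility is $w_2=q$, i.e. $k=2$, a contradiction. Hence $\overline{pq}\in S$.

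The step I expect to be the main obstacle — or at least the one requiring the most care — is justifying the inequality $\angle{p w_2 q}\le\pi/2$ cleanly, i.e. extracting from ``self-approaching'' the statement that after the first vertex the path stays in the relevant half-plane. The subtlety is that the self-approaching condition in the paper is stated for triples of points $a,b,c$ that may lie in the interior of segments, not just at vertices; I would need to take $a$ a point on the first segment $\overline{w_1w_2}$ arbitrarily close to $w_2$, $b=w_2$, and $c=q$, and pass to the limit, using $|\overline{bc}|<|\overline{ac}|$ to deduce that $q$ is weakly on the far side of the line through $w_2$ perpendicular to $\overline{pw_2}$; a boundary/limit argument gives the non-strict inequality $\angle{pw_2q}\le\pi/2$, which combined with the strict Gabriel condition ($\angle{pw_2q}<\pi/2$ for $w_2\in P\setminus\{p,q\}$, since such $w_2$ is strictly outside the closed disk) already yields the contradiction. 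I should double-check that I only ever need one of the two monotonicity directions at each end, which keeps the argument short; the two-directional strength of ``increasing-chord'' is used only to apply the half-plane condition once from each endpoint.
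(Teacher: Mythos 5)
Your overall strategy is sound and is essentially the paper's own argument: assume $\overline{pq}$ is a Gabriel edge with $\overline{pq}\notin S$, take an increasing-chord path $(p=w_1,\dots,w_k=q)$ with $k\ge 3$, and derive a contradiction with the empty-disk property by applying the self-approaching condition near the start of the path (the paper does this with an elementary two-case triangle argument using the foot of the altitude from $w_k$ onto $\overline{w_1w_2}$, while you invoke the normal-wedge/half-plane characterization and a limit along the first segment; both, like yours, need only the self-approaching direction from $p$ to $q$, so your second application at $w_{k-1}$ is redundant).

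However, you have the key inequality backwards, and as written your final step is not a contradiction. The half-plane condition at $w_2$ (equivalently, your limit argument $|\overline{w_2q}|<|\overline{aq}|$ for $a$ on $\overline{pw_2}$ approaching $w_2$, which gives $(w_2-p)\cdot(w_2-q)\le 0$) forces $\angle{pw_2q}\ge 90^\circ$, not $\le 90^\circ$: it says $q$ lies weakly on the far side of the perpendicular to $\overline{pw_2}$ at $w_2$, so the angle at $w_2$ in triangle $(p,w_2,q)$ is non-acute. It is $\angle{pw_2q}\ge 90^\circ$ that is equivalent to $w_2$ lying in the closed disk $D$ with diameter $\overline{pq}$ (your ``$\le\pi/2$, hence $w_2\in D$'' is a non sequitur, since $\angle{pw_2q}\le 90^\circ$ characterizes the points \emph{outside} or on the circle), and it is $\angle{pw_2q}\ge 90^\circ$ versus the strict Gabriel condition $\angle{pw_2q}<90^\circ$ that gives the contradiction; the pair ``$\le 90^\circ$ and $<90^\circ$'' that you state at the end is perfectly consistent and proves nothing. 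With the inequality corrected, your proof goes through and matches the paper's in substance.
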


\begin{proof}
Suppose, for a contradiction, that there exists an increasing-chord graph $G(P,S)$ and an edge $\overline{uv}$ of the Gabriel graph of $P$ such that $\overline{uv}\notin S$. Then, consider any increasing-chord path ${\cal P}=(u=w_1,w_2,\dots,w_k=v)$ in $G$. Since $\overline{uv}\notin S$, it follows that $k>2$. Assume w.l.o.g. that $w_1$, $w_2$, and $w_k$ appear in this clockwise order on the boundary of triangle $(w_1,w_2,w_k)$. Since the closed disk with diameter $\overline{uv}$ does not contain any point in its interior or on its boundary, it follows that $\angle{w_k w_2 w_1}<90^\circ$. If $\angle{w_2 w_1 w_k}\geq 90^\circ$, then $|w_1w_k|<|w_2w_k|$, a contradiction to the assumption that ${\cal P}$ is increasing-chord. If $\angle{w_2w_1w_k}<90^\circ$, then the altitude of triangle $(w_1,w_2,w_k)$ incident to $w_k$ hits $\overline{w_1 w_2}$ in a point $h$. Hence, $|hw_k|<|w_2w_k|$, a contradiction to the assumption that ${\cal P}$ is increasing-chord which proves the lemma.
\end{proof}

\section{Planar Increasing-Chord Graphs with Few Steiner Points} \label{se:steiner}

We show that, for any point set $P$, one can construct an increasing-chord planar graph $G(P',S)$ such that $P\subseteq P'$ and $|P'|\in O(|P|)$. Our result has two ingredients. The first one is that Gabriel triangulations are increasing-chord graphs. The second one is a result of Bern {\em et al.}~\cite{beg-pgmg-94} stating that, for any point set $P$, there exists a point set $P'$ such that $P\subseteq P'$, $|P'|\in O(|P|)$, and $P'$ admits a Gabriel triangulation. Combining these two facts proves our main result. The proof that Gabriel triangulations are increasing-chord graphs consists of two parts. In the first one, we prove that geometric graphs having a {\em $\theta$-path} between every pair of points are increasing-chord. In the second one, we prove that in every Gabriel triangulation there exists a $\theta$-path between every pair of points.


We introduce some definitions. The {\em slope} of a straight-line segment $\overline{uv}$ is the angle spanned by a clockwise rotation around $u$ that brings $\overline{uv}$ to coincide with the positive $x$-axis. Thus, if $\theta$ is the slope of $\overline{uv}$, then $\theta+k\cdot 360^{\circ}$ is also the slope of $\overline{uv}$, $\forall k\in \mathbb{Z}$. A straight-line segment $\overline{uv}$ is a {\em $\theta$-edge} if its slope is in the interval $[\theta-45^\circ;\theta+45^\circ]$. Also, a geometric path ${\cal P}=(p_1, \dots, p_k)$ is a {\em $\theta$-path} from $p_1$ to $p_k$ if $\overline{p_ip_{i+1}}$ is a $\theta$-edge, for every $1\leq i\leq k-1$. Consider a point $a$ on a $\theta$-path ${\cal P}$ from $p_1$ to $p_k$. Then, the subpath ${\cal P}_a$ of ${\cal P}$ from $a$ to $p_k$ is also a $\theta$-path. Moreover, denote by $W_\theta(a)$ the closed wedge with an angle of $90^\circ$ incident to $a$ and whose delimiting lines have slope $\theta-45^\circ$ and $\theta+45^\circ$; then ${\cal P}_a$ is contained in $W_\theta(a)$ (see Fig.~\ref{fig:wedges}). We have the following:

\begin{figure}[tb]
\begin{center}
\mbox{\includegraphics[scale=0.35]{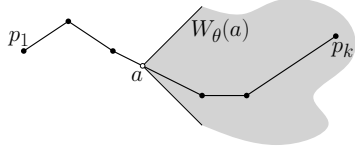}}
\caption{Wedge $W_\theta(a)$ contains path ${\cal P}_a$.}
\label{fig:wedges}
\end{center}
\end{figure}

\begin{lemma} \label{le:theta-path-approaching}
Let $\cal P$ be a $\theta$-path from $p_1$ to $p_k$. Then, $\cal P$ is increasing-chord.
\end{lemma}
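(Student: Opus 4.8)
The plan is to show that $\cal P$ is self-approaching from $p_1$ to $p_k$; since the reverse path of a $\theta$-path is a $(\theta+180^\circ)$-path, the same argument applied to it gives self-approaching from $p_k$ to $p_1$, and hence $\cal P$ is increasing-chord. So it suffices to prove: for any three points $a$, $b$, $c$ in this order along $\cal P$ (possibly interior to edges), $|\overline{bc}|<|\overline{ac}|$.

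First I would reduce to a statement about wedges. Recall from the setup that for any point $a$ on $\cal P$, the subpath ${\cal P}_a$ from $a$ to $p_k$ is again a $\theta$-path and lies entirely in the $90^\circ$ wedge $W_\theta(a)$ bounded by the two rays from $a$ with slopes $\theta-45^\circ$ and $\theta+45^\circ$. In particular both $b$ and $c$ lie in $W_\theta(a)$, and moreover $c$ lies in $W_\theta(b)$ (apply the same observation at $b$). The key geometric claim I would isolate is the following: if $b$ lies in the wedge $W_\theta(a)$ and $c$ lies in the wedge $W_\theta(b)$, then $|\overline{bc}| < |\overline{ac}|$ — equivalently, $c$ is closer to $b$ than to $a$. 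I would prove this by showing $\angle abc \ge 90^\circ$, i.e. the angle of the triangle $(a,b,c)$ at $b$ is non-acute; then the side $\overline{ac}$ opposite the largest angle is the strictly longest side, giving $|\overline{bc}|<|\overline{ac}|$. (A degenerate case where $a$, $b$, $c$ are collinear with $b$ between $a$ and $c$ is handled trivially.)

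To verify $\angle abc\ge 90^\circ$: the direction from $b$ to $a$ is the reverse of a direction into $W_\theta(a)$, so the ray $\overline{ba}$ points into the wedge $W_{\theta+180^\circ}(b)$, the $90^\circ$ wedge at $b$ with axis of slope $\theta+180^\circ$. On the other hand $c\in W_\theta(b)$, the $90^\circ$ wedge at $b$ with axis of slope $\theta$. These two wedges $W_{\theta}(b)$ and $W_{\theta+180^\circ}(b)$ are "opposite" $90^\circ$ cones sharing only the apex $b$; any ray from $b$ into the first and any ray from $b$ into the second form an angle of at least $90^\circ$ (the minimum $90^\circ$ being attained only on the boundary rays). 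Hence $\angle abc\ge 90^\circ$, as desired. I would make this precise by writing the slope of $\overline{bc}$ as some $\beta\in[\theta-45^\circ,\theta+45^\circ]$ and the slope of $\overline{ba}$ as some $\alpha$; since $a\notin W_\theta(b)$ unless $a=b$, and more precisely since $a$ lies on the far side, $\alpha\in[\theta+135^\circ,\theta+225^\circ]$, so $|\alpha-\beta|\in[90^\circ,180^\circ]$, which is exactly $\angle abc\ge 90^\circ$.

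The main obstacle I anticipate is bookkeeping with the edge-interior points and the wedge containments: one must check that the "subpath lies in the wedge" property, stated in the excerpt for vertices of $\cal P$, still holds when $a$ and $b$ are taken in the interiors of edges — this is immediate because a point interior to a $\theta$-edge $\overline{p_ip_{i+1}}$ has the rest of the path beyond it equal to a sub-segment of that edge followed by ${\cal P}_{p_{i+1}}$, and a $\theta$-edge from $a$ keeps $p_{i+1}$ in $W_\theta(a)$, whence $W_\theta(p_{i+1})\subseteq W_\theta(a)$. The rest is the elementary triangle inequality "largest angle faces longest side." I would also need the strictness: the angle $\angle abc$ could equal exactly $90^\circ$, and that still yields the strict inequality $|\overline{bc}|<|\overline{ac}|$ since the hypotenuse of a right triangle is strictly longer than either leg (and a degenerate triangle with $b\ne a$, $b\ne c$ cannot have $\angle abc=90^\circ$).
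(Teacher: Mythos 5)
Your proof is correct, and it differs from the paper's in one essential respect: where you prove the key geometric step yourself, the paper outsources it. The published proof makes exactly the same initial observations (the subpath ${\cal P}_a$ beyond any point $a$ lies in the $90^\circ$ wedge $W_\theta(a)$, and the reverse path is a $(\theta+180^\circ)$-path, so one direction of self-approachingness suffices), but then simply invokes Lemma~3 of Icking, Klein and Langetepe, which characterizes self-approaching curves as precisely those curves for which such a $90^\circ$ wedge exists at every point; no further geometry is needed. You instead prove the needed implication directly for polygonal $\theta$-paths: from $b\in W_\theta(a)$ and $c\in W_\theta(b)$ you deduce that the direction $\overline{ba}$ lies in the cone $[\theta+135^\circ,\theta+225^\circ]$ while $\overline{bc}$ lies in $[\theta-45^\circ,\theta+45^\circ]$, hence $\angle abc\geq 90^\circ$ and $|\overline{bc}|<|\overline{ac}|$ since the side opposite a non-acute angle is strictly longest (with the right-angle and collinear cases handled, and with the observation that interior points of edges inherit the wedge property). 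Your handling of strictness and of edge-interior points is sound, and since a $\theta$-path is strictly monotone in direction $\theta$, distinct parameter positions give distinct points, so no degenerate coincidence arises. What each approach buys: the paper's version is shorter and rests on a general characterization valid for arbitrary curves (an if-and-only-if, of which you only need the ``if'' direction); your version is self-contained and elementary, essentially re-proving that direction of the cited lemma in the special case of polygonal paths --- indeed it coincides with an alternative argument the authors drafted but suppressed in favor of the citation.
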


\begin{proof}
Lemma 3 in~\cite{ikl-sac-99} states the following (see also~\cite{aaiklr-gsac-01}): A curve $\cal C$ with end-points $p$ and $q$ is self-approaching from $p$ to $q$ if and only if, for every point $a$ on $\cal C$, there exists a closed wedge with an angle of $90^\circ$ incident to $a$ and containing the part of $\cal C$ between $a$ and $q$.
By definition of $\theta$-path, for every point $a$ on $\cal P$, the closed wedge $W_\theta(a)$ with an angle of $90^\circ$ incident to $a$ and whose delimiting lines have slope $\theta-45^\circ$ and $\theta+45^\circ$ contains the subpath ${\cal P}_a$ of $\cal P$ from $a$ to $p_k$. Hence, by Lemma 3 in~\cite{ikl-sac-99}, $\cal P$ is self-approaching from $p_1$ to $p_k$.
An analogous proof shows that $\cal P$ is self-approaching from $p_k$ to $p_1$, given that $\cal P$ is a $(\theta+180^{\circ})$-path from $p_k$ to $p_1$.
\remove{We prove that $\cal P$ is self-approaching from $p_1$ to $p_k$. The proof that $\cal P$ is self-approaching from $p_k$ to $p_1$ is analogous, as $\cal P$ is a $(\theta+180^{\circ})$-path from $p_k$ to $p_1$.

Consider any three points $a$, $b$, and $c$ on $\cal P$ in this order from $p_1$ to $p_k$. Assume that $a$, $b$, and $c$ appear in this counter-clockwise order along triangle $(a,b,c)$, the case in which they appear in clockwise order $a$, $b$, and $c$ is analogous. In order to prove that $|\overline{ac}|>|\overline{bc}|$ it suffices to prove that $\angle{abc}\geq 90^\circ$. Since $c$ is in $W_{\theta}(b)$ and since $b$ is in $W_{\theta}(a)$, we have that $\angle{abc}\geq \angle{abc'}$, where $c'$ is the intersection point of $\overline{ac}$ with the line delimiting $W_{\theta}(b)$ with slope $\theta+45^\circ$. Moreover, $\angle{abc'}\geq 90^\circ$, given that $\angle{abc'}$ is the supplementary  of the angle obtained by counter-clockwise rotating $\overline{ab}$ until it coincides with the line delimiting $W_{\theta}(a)$ with slope $\theta+45^\circ$. The latter angle is smaller than or equal to $90^\circ$, given that it is not larger than the angle of wedge $W_{\theta}(a)$. This proves that $\cal P$ is self-approaching from $p_1$ to $p_k$, and hence it proves the lemma.
}
\end{proof}

\remove{
}
We now prove that Gabriel triangulations contain $\theta$-paths.

\begin{lemma} \label{le:gabriel-is-theta}
Let $G$ be a Gabriel triangulation on a point set $P$. For every two points $s,t\in P$, there exists an angle $\theta$ such that $G$ contains a $\theta$-path from $s$ to $t$.
\end{lemma}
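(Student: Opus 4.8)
The plan is to take $\theta$ to be the slope of $\overline{st}$ and to build the required $\theta$-path by a greedy walk through $G$ that heads toward $t$. After a rotation of the axes we may assume $\theta=0$, so that $s$ and $t$ lie on a horizontal line with $s$ to the left of $t$; a $\theta$-edge is then an edge of slope in $[-45^\circ,45^\circ]$, which in particular, traversed from its left endpoint, strictly increases the $x$-coordinate. If $\overline{st}\in S$ we are done, as $(s,t)$ is then a $\theta$-path; so assume $\overline{st}\notin S$, which means that some point of $P$ lies strictly inside the disk with diameter $\overline{st}$.

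I would first record the one property of acute triangles that drives everything: if $uvw$ is a triangle all of whose angles are acute and the ray from $v$ through an interior point of $uvw$ has slope in $[-45^\circ,45^\circ]$, then at least one of $\overline{vu}$, $\overline{vw}$ is a $0$-edge. Indeed, writing $\alpha,\beta$ for the slopes of $\overline{vu},\overline{vw}$, these straddle the slope of that ray (which lies in $[-45^\circ,45^\circ]$), so $\min\{\alpha,\beta\}\le 45^\circ$ and $\max\{\alpha,\beta\}\ge -45^\circ$; were neither edge a $0$-edge we would get $\min\{\alpha,\beta\}<-45^\circ<45^\circ<\max\{\alpha,\beta\}$, hence an angle $\angle uvw>90^\circ$, a contradiction. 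This is where Gabriel-ness enters: by~\cite{gs-nsagva-69} every internal face of $G$ is an acute triangle.

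Now build a sequence $s=v_1,v_2,\dots$ maintaining the invariant that $t$ lies in the closed $90^\circ$ wedge $W_0(v_i)$ --- equivalently, that the slope of $\overline{v_it}$ lies in $[-45^\circ,45^\circ]$ --- which holds initially since $t$ is due east of $s$. Given $v_i\neq t$, let $\tau_i$ be the internal face of $G$ incident to $v_i$ whose angle at $v_i$ contains the direction from $v_i$ to $t$. Since $t$ is a vertex of $G$, it does not lie in the interior of $\tau_i$; hence either $\overline{v_it}$ is an edge of $\tau_i$, in which case it is a $0$-edge and $(v_1,\dots,v_i,t)$ is the desired path, or the segment $\overline{v_it}$ leaves $\tau_i$ through the edge $\overline{ab}$ opposite $v_i$ at a point strictly closer to $v_i$ than $t$ (the degenerate case where it passes through $a$ or $b$ is handled directly). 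In the latter case the claim above provides a $0$-edge among $\overline{v_ia},\overline{v_ib}$, and I let $v_{i+1}$ be the corresponding endpoint, chosen so as to preserve the invariant when there is a choice. Since every edge of the walk is a $0$-edge, $x(v_1)<x(v_2)<\cdots$ strictly increases, so the walk visits pairwise distinct vertices and terminates; and since it can be continued whenever $v_i\neq t$, it must terminate at $t$, yielding a $\theta$-path from $s$ to $t$.

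The main obstacle is justifying the parenthetical choice: proving that among the $0$-edges at $v_i$ produced by the claim there is always one whose endpoint $v_{i+1}$ still satisfies $t\in W_0(v_{i+1})$; without this the walk could reach a vertex from which $t$ is no longer reachable by $0$-edges. Here one must exploit more than the acuteness of the single face $\tau_i$: the endpoints $a,b$ of $\overline{ab}$ both lie in $W_0(v_i)$, the point $t$ lies on the far side of the line through $a$ and $b$, and, since $\overline{ab}$ is a Gabriel edge, the disk with diameter $\overline{ab}$ is empty, so $\angle atb<90^\circ$; combining these with the acuteness of the face of $G$ on the far side of $\overline{ab}$ should give that one of $a,b$ is simultaneously a $0$-edge endpoint and satisfies $t\in W_0(\cdot)$. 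The local configurations that would otherwise break this (for instance $v_i,a,b$ almost a right triangle with $t$ just past $\overline{ab}$) are exactly the ones that cannot be completed to a Gabriel triangulation, so excluding them carefully is where the bulk of the argument lies.
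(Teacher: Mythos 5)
There is a genuine gap, and it sits exactly where you acknowledge it: the claim that at every step one of the two candidate endpoints is both the endpoint of a $0$-edge at $v_i$ and still satisfies $t\in W_0(\cdot)$. This is not implied by the facts you list, and those facts themselves are partly wrong: the endpoints $a,b$ of the far edge of $\tau_i$ need \emph{not} both lie in $W_0(v_i)$ (the face angle at $v_i$ only straddles the direction to $t$; one of its sides may have slope well above $45^\circ$). Concretely, take $v_i=(0,0)$, $a=(1,2.3)$, $b=(2,1)$, $t=(2.3,2.3)$. The triangle $v_iab$ is acute (its angles are roughly $40^\circ$, $61^\circ$, $79^\circ$); the direction from $v_i$ to $t$ has slope $45^\circ$, so $t\in W_0(v_i)$ and $\overline{v_it}$ leaves the face through $\overline{ab}$; the only $0$-edge at $v_i$ in this face is $\overline{v_ib}$ (the slope of $\overline{v_ia}$ is about $66^\circ$); yet the slope of $\overline{bt}$ is about $77^\circ$, so $t\notin W_0(b)$ and no $0$-path from $b$ can ever reach $t$. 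Moreover $t$ lies outside the disk with diameter $\overline{ab}$ (so $\angle atb<90^\circ$), i.e.\ the Gabriel/acuteness conditions you propose to invoke do not exclude this configuration. So the greedy walk can be forced into a vertex from which the invariant is irrecoverably lost, and no purely local choice rule of the kind you describe can be proved correct from the ingredients you name.

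The deeper issue is that you fix $\theta$ to be the slope of $\overline{st}$, which is a strictly stronger statement than the lemma: the lemma only asserts that \emph{some} $\theta$ works, and the paper's proof is built around the freedom to choose it. The paper rotates so that $s$ and $t$ are horizontally aligned, shows that for \emph{every} $\theta\in[-45^\circ,45^\circ]$ there is a maximal $\theta$-path from $s$ and a maximal $(\theta+180^\circ)$-path from $t$ ending on the convex hull, classifies such maximal paths as high or low, uses a continuity argument in $\theta$ to find one value for which $s$ has both a high and a low maximal $\theta$-path (so the parity can be matched with $t$'s path), and then uses $x$-monotonicity together with planarity of the triangulation to force two same-parity maximal paths to share a vertex, whose concatenation gives the $\theta$-path. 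Repairing your argument would in effect require letting $\theta$ vary with the pair $(s,t)$ and controlling the global behaviour of maximal paths, which is essentially the paper's sweep-plus-intersection argument; as written, your proposal does not establish the lemma.
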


\begin{proof}
Consider any two points $s,t\in P$. Clockwise rotate $G$ of an angle $\phi$ so that $y(s)=y(t)$ and $x(s)<x(t)$. Observe that, if there exists a $\theta$-path from $s$ to $t$ after the rotation, then there exists a $(\theta+\phi)$-path from $s$ to $t$ before the rotation.

A $\theta$-path $(p_1,\dots,p_k)$ in $G$ is {\em maximal} if there is no $z\in P$ such that $\overline{p_kz}$ is a $\theta$-edge. For every maximal $\theta$-path ${\cal P}=(p_1, \ldots , p_k)$ in $G$, $p_k$ lies on the border of ${\cal H}_P$. Namely, assume the converse, for a contradiction. Since $G$ is a Gabriel triangulation, the angle between any two consecutive edges incident to an internal vertex of $G$ is smaller than $90^\circ$, thus there is a $\theta$-edge incident to $p_k$.  This contradicts the maximality of $\cal P$.  A maximal $\theta$-path $(s=p_1, \ldots , p_k)$ is {\em high} if either (a) $y(p_k)>y(t)$ and $x(p_k)<x(t)$, or (b) $\overline{p_i p_{i+1}}$ intersects the vertical line through $t$ at a point above $t$, for some $1\leq i\leq k-1$. Symmetrically, a maximal $\theta$-path $(s=p_1, \ldots , p_k)$ is {\em low} if either (a) $y(p_k)<y(t)$ and $x(p_k)<x(t)$, or (b) $\overline{p_i p_{i+1}}$ intersects the vertical line through $t$ at a point below $t$, for some $1\leq i\leq k-1$. High and low $(\theta+180^\circ)$-paths starting at $t$ can be defined analogously. The proof of the lemma consists of two main claims.

{\bf Claim 1.} If a maximal $\theta$-path ${\cal P}_s$ starting at $s$ and a maximal $(\theta+180^\circ)$-path ${\cal P}_t$ starting at $t$ exist such that ${\cal P}_s$ and ${\cal P}_t$ are both high or both low, for some $-45^\circ\leq \theta \leq 45^{\circ}$, then there exists a $\theta$-path in $G$ from $s$ to $t$.

{\bf Claim 2.} For some $-45^\circ\leq \theta \leq 45^{\circ}$, there exist a maximal $\theta$-path ${\cal P}_s$ starting at $s$ and a maximal $(\theta+180^\circ)$-path ${\cal P}_t$ starting at $t$ that are both high or both low.

Observe that Claims 1 and 2 imply the lemma.

We now prove Claim~1. Suppose that $G$ contains a maximal high $\theta$-path ${\cal P}_s$ starting at $s$ and a maximal high $(\theta+180^\circ)$-path ${\cal P}_t$ starting at $t$, for some $-45^\circ\leq \theta \leq 45^{\circ}$. If ${\cal P}_s$ and ${\cal P}_t$ share a vertex $v\in P$, then the subpath of ${\cal P}_s$ from $s$ to $v$ and the subpath of ${\cal P}_t$ from $v$ to $t$ form a $\theta$-path in $G$ from $s$ to $t$. Thus, it suffices to show that ${\cal P}_s$ and ${\cal P}_t$ share a vertex. For a contradiction assume the converse. Let $p_s$ and $p_t$ be the end-vertices of ${\cal P}_s$ and ${\cal P}_t$ different from $s$ and $t$, respectively. Recall that $p_s$ and $p_t$ lie on the border of ${\cal H}_P$. Denote by ${\vec l}_s$ and ${\vec l}_t$ the vertical half-lines starting at $s$ and $t$, respectively, and directed towards increasing $y$-coordinates; also, denote by $q_s$ and $q_t$ the intersection points of ${\vec l}_s$ and ${\vec l}_t$ with the border of ${\cal H}_P$, respectively. Finally, denote by $Q$ the curve obtained by clockwise following the border of ${\cal H}_P$ from $q_s$ to $q_t$.

Assume that $x(p_s)\geq x(t)$, as in Fig.~\ref{fig:gabriel}(a). Path ${\cal P}_s$ starts at $s$ and passes through a point $r_s$ on ${\vec l}_t$ (possibly $r_s=q_t$), given that  $x(p_s)\geq x(t)$. Path ${\cal P}_t$ starts at $t$ and either passes through a point $r_t$ on ${\vec l}_s$, or ends at a point $p_t$ on $Q$, depending on whether $x(p_t)\leq x(s)$ or $x(p_t)> x(s)$, respectively. Since ${\cal P}_s$ is $x$-monotone and lies in ${\cal H}_P$, it follows that $r_t$ and $p_t$ are above or on ${\cal P}_s$; also, $t$ is below ${\cal P}_s$ given that ${\cal P}_s$ is a high path. It follows ${\cal P}_s$ and ${\cal P}_t$ intersect, hence they share a vertex given that $G$ is planar.

\begin{figure}[tb]
\begin{center}
\begin{tabular}{c c c}
\mbox{\includegraphics[scale=0.45]{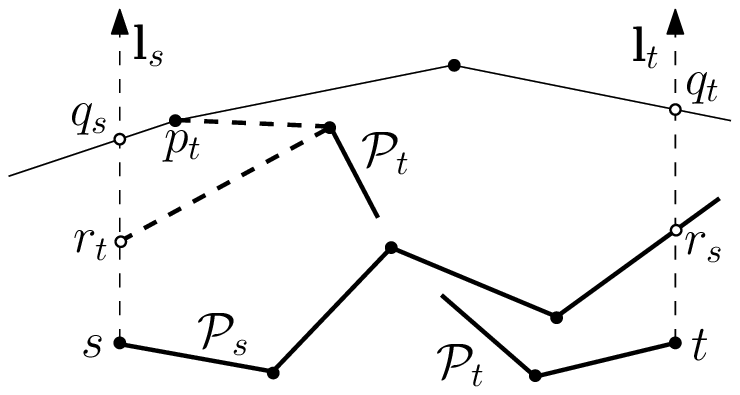}} & \hspace{5mm}
\mbox{\includegraphics[scale=0.45]{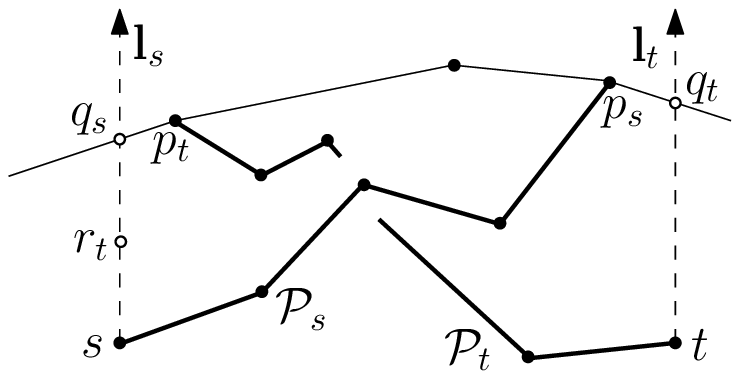}} & \hspace{5mm}
\mbox{\includegraphics[scale=0.45]{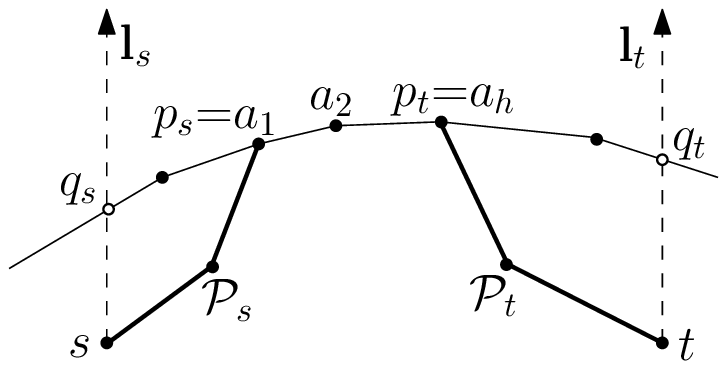}}\\
(a) \hspace{5mm} & \hspace{5mm} (b) & \hspace{5mm} (c)
\end{tabular}
\caption{Paths ${\cal P}_s$ and ${\cal P}_t$ intersect if: (a) $x(p_s)\geq x(t)$, (b) $x(s)<x(p_t)<x(p_s)<x(t)$, and (c) $x(s)<x(p_s)<x(p_t)<x(t)$.}
\label{fig:gabriel}
\end{center}
\end{figure}

Analogously, if $x(p_t)\leq x(s)$, then ${\cal P}_s$ and ${\cal P}_t$ share a vertex.

If $x(p_t)= x(p_s)$, then ${\cal P}_s\cup {\cal P}_t$ is a $\theta$-path from $s$ to $t$.

Next, if $x(s)<x(p_t)<x(p_s)<x(t)$, as in Fig.~\ref{fig:gabriel}(b), then the end-points of ${\cal P}_s$ and ${\cal P}_t$ alternate along the boundary of the region $R$ that is the intersection of ${\cal H}_P$, of the half-plane to the right of ${\vec l}_s$, and of the half-plane to the left of ${\vec l}_t$. Since ${\cal P}_s$ and ${\cal P}_t$ are $x$-monotone, they lie in $R$, thus they intersect, and hence they share a vertex.

Finally, assume that $x(s)<x(p_s)< x(p_t)<x(t)$, as in Fig.~\ref{fig:gabriel}(c). Let $a_1, \ldots , a_h$ be the clockwise order of the points along $Q$, starting at $p_s=a_1$ and ending at $a_h=p_t$. By the assumption $x(p_s)< x(p_t)$ we have $h\geq 2$. We prove that $\overline{a_1a_2}$ is a $\theta$-edge. Suppose, for a contradiction, that $\overline{a_1a_2}$ is not a $\theta$-edge. Since the slope of $\overline{a_1a_2}$ is larger than $-90^\circ$ and smaller than $90^\circ$, it is either larger than $\theta+45^\circ$ and smaller than $90^{\circ}$, or it is larger than $-90^\circ$ and smaller than $\theta-45^{\circ}$. First, assume that the slope of $\overline{a_1a_2}$ is larger than $\theta+45^\circ$ and smaller than $90^{\circ}$, as in Fig.~\ref{fig:gabriel-2}(a). Since the slope of $\overline{sa_1}$ is between $\theta-45^\circ$ and $\theta+45^\circ$, it follows that $a_1$ is below the line composed of $\overline{sa_2}$ and $\overline{a_2t}$, which contradicts the assumption that $a_1$ is on $Q$. Second, if the slope of $\overline{a_1a_2}$ is larger than $-90^\circ$ and smaller than $\theta-45^{\circ}$, then we distinguish two further cases. In the first case, represented in Fig.~\ref{fig:gabriel-2}(b), the slope of $\overline{a_1t}$ is larger than $\theta-45^{\circ}$, hence $a_2$ is below the line composed of $\overline{sa_1}$ and $\overline{a_1t}$, which contradicts the assumption that $a_2$ is on $Q$. In the second case, represented in Fig.~\ref{fig:gabriel-2}(c), the slope of $\overline{a_1t}$ is in the interval $[-90^\circ;\theta-45^\circ]$. It follows that the slope of $\overline{ta_1}$ is in the interval $[90^\circ;\theta+135^\circ]$; since the slope of $\overline{ta_h}$ is smaller than the one of $\overline{ta_1}$, we have that ${\cal P}_t$ is not a $(\theta+180^\circ)$-path. This contradiction proves that $\overline{a_1a_2}$ is a $\theta$-edge. However, this contradicts the assumption that ${\cal P}_s$ is a maximal $\theta$-path, and hence concludes the proof of Claim~1.

\begin{figure}[tb]
\begin{center}
\begin{tabular}{c c c}
\mbox{\includegraphics[scale=0.45]{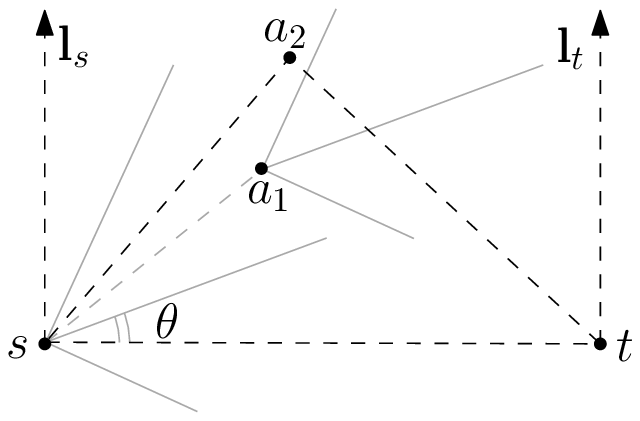}} & \hspace{5mm}
\mbox{\includegraphics[scale=0.45]{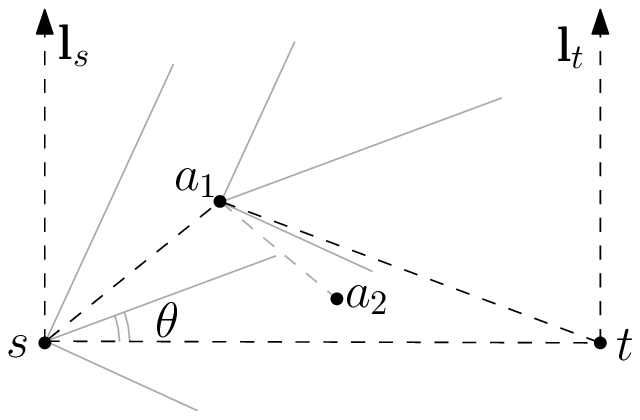}} & \hspace{5mm}
\mbox{\includegraphics[scale=0.45]{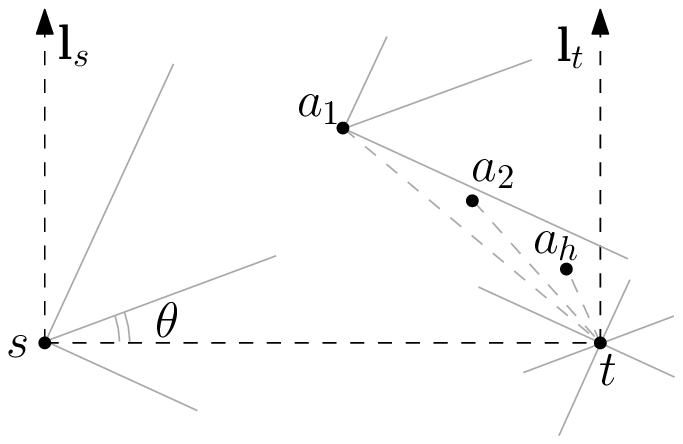}}\\
(a) \hspace{5mm} & \hspace{5mm} (b) & \hspace{5mm} (c)
\end{tabular}
\caption{Illustration for the proof that $\overline{a_1a_2}$ is a $\theta$-edge.}
\label{fig:gabriel-2}
\end{center}
\end{figure}


We now prove Claim~2.  First, we prove that, for {\em every} $\theta$ in the interval $[-45^\circ;45^\circ]$, there exists a maximal $\theta$-path starting at $s$ that is low or high. Indeed, it suffices to prove that there exists a $\theta$-edge incident to $s$, as such an edge is also a $\theta$-path starting at $s$, and the existence of a $\theta$-path starting at $s$ implies the existence of a maximal $\theta$-path starting at $s$. Consider a straight-line segment $e_{\theta}$ that is the intersection of a directed half-line incident to $s$ with slope $\theta$ and of a disk of arbitrarily small radius centered at $s$. If $e_{\theta}$ is internal to ${\cal H}_P$, then consider the two edges $e_1$ and $e_2$ of $G$ that are encountered when counter-clockwise and clockwise rotating $e_{\theta}$ around $s$, respectively. Then, $e_1$ or $e_2$ is a $\theta$-edge, as the angle spanned by a clockwise rotation bringing $e_1$ to coincide with $e_2$ is smaller than $90^\circ$, given that $G$ is a Gabriel triangulation, and $e_{\theta}$ is encountered during such a rotation. If $e_{\theta}$ is outside ${\cal H}_P$, which might happen if $s$ on the boundary of ${\cal H}_P$, then assume that the slope of $e_{\theta}$ is in the interval $[0^\circ;45^\circ]$ (the case in which the slope of $e_{\theta}$ is in the interval $[-45^\circ;0^\circ]$ is analogous). Then, the angle spanned by a clockwise rotation bringing $e_{\theta}$ to coincide with $\overline{st}$ is  at most $45^\circ$. Since $\overline{st}$ is in interior or on the boundary of ${\cal H}_P$, an edge $e_1$ of $G$ is encountered during such a rotation, hence $e_1$ is a $\theta$-edge. An analogous proof shows that, for {\em every} $\theta$ in the interval $[-45^\circ;45^\circ]$, there exists a maximal $(\theta+180^\circ)$-path starting at $t$ that is low or high.

Second, we prove that, for {\em some} $\theta\in[-45^\circ;45^\circ]$, there exist a maximal low $\theta$-path {\em and} a maximal high $\theta$-path both starting at $s$. All the maximal $(-45^\circ)$-paths (all the maximal $(45^\circ)$-paths) starting at $s$ are low (resp. high), given that every edge on these paths has slope in the interval $[-90^\circ;0^\circ]$ (resp. $[0^\circ;90^\circ]$). Thus, let $\theta$ be the smallest constant in the interval $[-45^\circ;45^\circ]$ such that a maximal high $\theta$-path exists. We prove that there also exists a maximal low $\theta$-path starting at $s$. Consider an arbitrarily small $\epsilon>0$. By assumption, there exists no high $(\theta-\epsilon)$-path. Hence, from the previous argument there exists a low $(\theta-\epsilon)$-path $\cal P$. If $\epsilon$ is sufficiently small, then no edge of $\cal P$ has slope in the interval $[\theta-45^\circ-\epsilon;\theta-45^\circ)$. Thus every edge of $\cal P$ has slope in the interval $[\theta-45^\circ;\theta+45^\circ-\epsilon)$, hence $\cal P$ is a maximal low $\theta$-path starting at $s$.

Since there exist a maximal high $\theta$-path starting at $s$, a maximal low $\theta$-path starting at $s$, and a maximal $(\theta+180^\circ)$-path starting at $t$ that is low or high, it follows that there exist a maximal $\theta$-path ${\cal P}_s$ starting at $s$ and a maximal $(\theta+180^\circ)$-path ${\cal P}_t$ starting at $t$ that are both high or both low. This proves Claim~2 and hence the lemma.
\end{proof}


Lemma~\ref{le:theta-path-approaching} and Lemma~\ref{le:gabriel-is-theta} immediately imply the following.

\begin{corollary}\label{cor:gabriel-is-increasing}
Any Gabriel triangulation is increasing-chord.
\end{corollary}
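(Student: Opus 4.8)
The plan is to obtain the corollary directly by chaining the two preceding lemmas, since together they already supply exactly the object required by the definition of an increasing-chord graph. First I would unwind the definition: a geometric graph $G$ is increasing-chord if, for every ordered pair of its vertices $u,v$, the graph $G$ contains a path between $u$ and $v$ that is self-approaching both from $u$ to $v$ and from $v$ to $u$. So the whole task reduces to exhibiting, for each such pair, one path in $G$ with this two-sided self-approaching property.

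Next I would fix a Gabriel triangulation $G$ on a point set $P$ and pick two arbitrary vertices $s,t$ of $G$; these are precisely two points of $P$, since a Gabriel triangulation has $P$ as its vertex set and no Steiner points, so Lemma~\ref{le:gabriel-is-theta} applies verbatim. That lemma yields an angle $\theta$ for which $G$ contains a $\theta$-path from $s$ to $t$. Applying Lemma~\ref{le:theta-path-approaching} to this $\theta$-path shows it is increasing-chord, i.e.\ self-approaching from $s$ to $t$ and from $t$ to $s$. Hence $G$ contains an increasing-chord path between $s$ and $t$; since $s,t$ were arbitrary vertices, $G$ is increasing-chord.

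I expect no real obstacle here: the substantive work has already been done, split between one geometric ingredient (the $90^\circ$-wedge characterization of self-approaching curves, via Lemma~\ref{le:theta-path-approaching}) and one structural ingredient (the acute-angle property of Gabriel triangulations forcing the existence of $\theta$-edges, via Lemma~\ref{le:gabriel-is-theta}). The only things that need a moment's care are bookkeeping: confirming that the notion of an increasing-chord \emph{path} produced by Lemma~\ref{le:theta-path-approaching} is the same path condition that appears in the definition of an increasing-chord \emph{graph}, and that Lemma~\ref{le:gabriel-is-theta}, phrased for pairs of points of $P$, indeed ranges over all pairs of vertices of $G$. Both checks are immediate, so the corollary follows at once.
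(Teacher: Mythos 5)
Your argument is exactly the paper's: the corollary is obtained by combining Lemma~\ref{le:gabriel-is-theta} (existence of a $\theta$-path between any two vertices of a Gabriel triangulation) with Lemma~\ref{le:theta-path-approaching} ($\theta$-paths are increasing-chord), which is precisely what the paper means by saying the two lemmas immediately imply the corollary. Your additional bookkeeping checks are fine and introduce no gap.
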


We are now ready to state the main result of this section.

\begin{theorem} \label{th:steiner}
Let $P$ be a point set with $n$ points.
One can construct in $O(n \log n)$ time an increasing-chord planar graph $G(P',S)$ such that $P\subseteq P'$ and $|P'|\in O(n)$.
\end{theorem}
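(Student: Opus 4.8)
The plan is to combine the two ingredients already set up in this section: Corollary~\ref{cor:gabriel-is-increasing}, which says every Gabriel triangulation is increasing-chord, and the cited result of Bern, Eppstein, and Gilbert~\cite{beg-pgmg-94}. First I would invoke~\cite{beg-pgmg-94}: for any point set $P$ with $n$ points, one can construct in $O(n\log n)$ time a superset $P'\supseteq P$ with $|P'|\in O(n)$ such that $P'$ admits a Gabriel triangulation (equivalently, a triangulation in which every angle is acute — the so-called nonobtuse or acute triangulation with few Steiner points). I should double-check that the running time claimed in~\cite{beg-pgmg-94} is indeed $O(n\log n)$; if that paper only guarantees polynomial time, I would either cite a sharper follow-up or weaken the time bound in the statement, but I expect the $O(n\log n)$ bound is available.

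Next I would actually build the Gabriel triangulation $G(P',S)$ of $P'$. Having the point set $P'$ that \emph{admits} a Gabriel triangulation is not quite the same as having the triangulation in hand, so I would note that the Gabriel graph of $P'$ can be extracted from the Delaunay triangulation of $P'$ (the Gabriel graph is a subgraph of the Delaunay triangulation, and testing the disk-emptiness condition for each Delaunay edge takes constant time per edge), and the Delaunay triangulation of $O(n)$ points is computable in $O(n\log n)$ time~\cite{bcko-cgta-08}. Since $P'$ admits a Gabriel triangulation, this Gabriel graph \emph{is} a triangulation, so $G(P',S)$ is well-defined and planar, with $|S|\in O(|P'|)=O(n)$.

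Finally I would apply Corollary~\ref{cor:gabriel-is-increasing} to conclude that $G(P',S)$ is increasing-chord: for every ordered pair of vertices of $P'$ — in particular for every pair of points of $P$ — there is a path in $G$ that is self-approaching in both directions. Together with planarity, the containment $P\subseteq P'$, the size bound $|P'|\in O(n)$, and the $O(n\log n)$ construction time, this gives exactly the statement of the theorem.

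The main obstacle is not really a mathematical one — the hard work is in Lemmas~\ref{le:theta-path-approaching} and~\ref{le:gabriel-is-theta} and in the external result~\cite{beg-pgmg-94} — but rather making sure the two black boxes fit together cleanly: namely that the Steiner-point construction of~\cite{beg-pgmg-94} produces a point set whose Gabriel graph is a full triangulation (not merely a point set on which \emph{some} acute triangulation exists that might differ from the Gabriel graph), and that all the steps, including extracting the Gabriel graph, stay within $O(n\log n)$ time. The characterization that a triangulation is Gabriel iff all its angles are acute~\cite{gs-nsagva-69} is the bridge that makes this identification work.
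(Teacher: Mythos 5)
Your proposal is correct and follows essentially the same route as the paper: invoke Bern--Eppstein--Gilbert to obtain $P'\supseteq P$ with $|P'|\in O(n)$ admitting a Gabriel triangulation computable in $O(n\log n)$ time, then apply Corollary~\ref{cor:gabriel-is-increasing}. Your extra care about actually extracting the Gabriel graph from the Delaunay triangulation and checking the acute-angle characterization is a reasonable (and slightly more explicit) elaboration of what the paper simply cites from~\cite{beg-pgmg-94}, but it is not a different argument.
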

\begin{proof}
Bern, Eppstein, and Gilbert~\cite{beg-pgmg-94} proved that, for any point set $P$, there exists a point set $P'$ with $P\subseteq P'$ and $|P'|\in O(n)$ such that $P'$ admits a Gabriel triangulation $G$. Both $P'$ and $G$ can be computed in $O(n \log n)$ time~\cite{beg-pgmg-94}. By Corollary~\ref{cor:gabriel-is-increasing}, $G$ is increasing-chord, which concludes the proof.
\end{proof}

We remark that $o(|P|)$ Steiner points are not always enough to augment a point set $P$ to a point set that admits a Gabriel triangulation. Namely, consider any point set $B$ with $O(1)$ points that admits no Gabriel triangulation. Construct a point set $P$ out of $|P|/|B|$ copies of $B$ placed ``far apart'' from each other, so that any triangle with two points in different copies of $B$ is obtuse. Then, a Steiner point has to be added inside the convex hull of each copy of $B$ to obtain a point set that admits a Gabriel triangulation.


\section{Increasing-Chord Convex Graphs with Few Edges}

In this section we prove the following theorem;

\begin{theorem} \label{th:convex-non-planar}
For every convex point set $P$ with $n$ points, there exists an increasing-chord geometric graph $G(P,S)$ such that $|S|\in O(n \log n)$.
\end{theorem}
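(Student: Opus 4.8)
The plan is to use a divide-and-conquer construction on the convex point set $P$, building the edge set $S$ recursively so that every pair of points is connected by an increasing-chord path, and charging $O(n)$ edges to each of the $O(\log n)$ levels of the recursion. The key tool is Lemma~\ref{lemma:half_convex_planar_graph}: a \emph{one-sided} convex point set admits an increasing-chord planar graph with a linear number of edges. So the goal is to cover $P$ by few one-sided pieces in a hierarchical fashion.

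Here is the construction I would carry out. Pick a directed line $\vec d$ in general position with respect to $P$, and let $p_{\min}$ and $p_{\max}$ be the minimum and maximum points of $P$ with respect to $\vec d$. The boundary of ${\cal H}_P$ splits into two chains between $p_{\min}$ and $p_{\max}$; call them the \emph{upper} and \emph{lower} chain. The whole set $P$ need not be one-sided, but I would split $P$ roughly in half by a line $\ell$ parallel to $\vec d$ through a median point, obtaining $P_{\text{left}}$ and $P_{\text{right}}$, recurse on each half, and then add a linear number of ``bridge'' edges that make every left-right pair increasing-chord. The recursion bottoms out when a subset is one-sided (in particular when it has constant size), at which point Lemma~\ref{lemma:half_convex_planar_graph} supplies $2k-3$ edges. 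For the merge step: observe that the points of $P_{\text{left}}$ lying on the upper chain of ${\cal H}_P$ together with the points of $P_{\text{right}}$ on the upper chain, plus possibly a couple of extreme points, form a one-sided convex set (its minimum and maximum with respect to $\vec d$ are consecutive on its own hull); likewise for the lower chain. Adding the $O(n)$ edges of the increasing-chord graph spanning each of these two one-sided sets at each recursion level, and appealing to the monotone-path argument of Lemma~\ref{lemma:half_convex_planar_graph}, guarantees an increasing-chord path between any $u\in P_{\text{left}}$ and $v\in P_{\text{right}}$: route $u$ monotonically to $p_{\min}$ or $p_{\max}$ along the recursively built structure, then along the one-sided hull-chain graph to $v$, checking that concatenation preserves self-approachingness in both directions by keeping the whole route monotone in a fixed direction (so it is a $\theta$-path and Lemma~\ref{le:theta-path-approaching} applies).

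The recursion depth is $O(\log n)$, and each level contributes $O(n)$ edges (a constant number of one-sided increasing-chord graphs, each of linear size, covering disjoint point ranges), so $|S|\in O(n\log n)$, as required.

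The main obstacle I anticipate is the merge step: ensuring that a single concatenated path $u \rightsquigarrow p_{\min} \rightsquigarrow v$ (or via $p_{\max}$) is simultaneously self-approaching from $u$ to $v$ \emph{and} from $v$ to $u$, when the two halves were built recursively with respect to possibly different splitting directions. The clean way around this is to commit to one global direction $\vec d$ for the entire recursion and to always split by lines parallel to $\vec d$; then every path the construction produces is monotone with respect to $\vec d$ and also $y$-monotone on each one-sided piece, hence a $\theta$-path for a fixed $\theta$, so Lemma~\ref{le:theta-path-approaching} gives increasing-chordedness for free and the concatenation is automatically valid. Verifying that the hull chains of the two halves glue into genuinely one-sided sets, and that every pair — including pairs within the same half that must nonetheless be connected through the bridge structure — gets such a monotone path, is the part that needs care.
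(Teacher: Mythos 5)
There is a genuine gap, and it sits exactly where you predicted: the merge step. Your justification for concatenating $u\rightsquigarrow p_{\min}\rightsquigarrow v$ is that the whole route is ``monotone in a fixed direction, hence a $\theta$-path''. That implication is false: a $\theta$-path requires \emph{every} edge to have slope in a closed $90^\circ$ interval $[\theta-45^\circ,\theta+45^\circ]$, i.e.\ monotonicity with respect to every direction in a $90^\circ$ range, whereas monotonicity with respect to one fixed direction only constrains the slopes to a $180^\circ$ range. An $x$-monotone route that descends steeply to $p_{\min}$ and then ascends steeply to $v$ is not increasing-chord (chords near the turning point shrink and then grow), and concatenating two increasing-chord paths at an extreme point does not in general yield an increasing-chord path. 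Pairs $u,v$ lying on opposite hull chains with respect to your global direction $\vec d$ are precisely the hard case, and for them no single $\theta$ works uniformly, so Lemma~\ref{le:theta-path-approaching} cannot be invoked for the concatenated route. Relatedly, the bookkeeping ``each level contributes $O(n)$ bridge edges serving all left--right pairs'' is asserted but never established; the within-chain graphs you add serve only pairs on the same chain.

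The paper closes exactly this gap with a different divide-and-conquer: instead of concatenating through an extreme point, it ensures that every cross pair eventually lies together in a \emph{single} one-sided set, so one hull path (monotone in two orthogonal directions, hence a genuine $\theta$-path) connects them. Concretely, after handling same-chain pairs with Lemma~\ref{lemma:half_convex_planar_graph}, it introduces a second direction $\vec d_2$ and the balanced $(\vec d_1,\vec d_2)$-partition of Lemma~\ref{le:partition}: the four ``quadrant'' sets $Q_a,Q_b,Q_c,Q_d$ have the property that $Q_a\cup Q_c$ and $Q_b\cup Q_d$ are each one-sided with respect to $\vec d_2$, so those cross pairs are served directly by two Lemma~\ref{lemma:half_convex_planar_graph} graphs with $O(n)$ edges, while the remaining cross pairs ($Q_a$--$Q_d$ and $Q_b$--$Q_c$) are deferred to two recursive subproblems of size at most $n/2+1$ each (this is what the balance condition buys), yielding $f(n)\le 2n+2f(\tfrac n2+1)\in O(n\log n)$. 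To repair your argument you would need an analogous device — some way of choosing, per level, a second direction so that the pairs you must bridge share a one-sided set — rather than relying on concatenation and single-direction monotonicity.
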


The main idea behind the proof of Theorem~\ref{th:convex-non-planar} is that any convex point set $P$ can be decomposed into some one-sided convex point sets $P_1,\dots,P_k$ (which by Lemma~\ref{lemma:half_convex_planar_graph} admit increasing-chord spanning graphs with linearly many edges) in such a way that every two points of $P$ are part of some $P_i$ and that $\sum |P_i|$ is small. In order to perform such a decomposition, we introduce the concept of {\em balanced $({\vec d}_1,{\vec d}_2)$-partition}.

Let $P$ be a convex point set and let $\vec d$ be a directed straight line not orthogonal to any line through two points of $P$. See Fig.~\ref{fig:sets}. Let $p_a({\vec d})$ and $p_b({\vec d})$ be the minimum and maximum point of $P$ with respect to ${\vec d}$, respectively. Let $P_1({\vec d})$ be composed of those points in $P$ that are encountered when clockwise walking along the boundary of ${\cal H}_P$ from $p_a({\vec d})$ to $p_b({\vec d})$, where $p_a({\vec d})\in P_1({\vec d})$ and $p_b({\vec d})\notin P_1({\vec d})$. Analogously, let $P_2({\vec d})$ be composed of those points in $P$ that are encountered when clockwise walking along the boundary of ${\cal H}_P$ from $p_b({\vec d})$ to $p_a({\vec d})$, where $p_b({\vec d})\in P_2({\vec d})$ and $p_a({\vec d})\notin P_2({\vec d})$.

\begin{figure}[tb]
\begin{center}
\mbox{\includegraphics[scale=0.3]{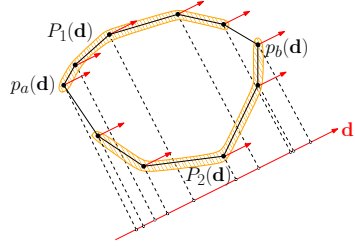}}
\caption{Subsets $P_1({\vec d})$ and $P_2({\vec d})$ of a point set $P$ determined by a directed straight line $\vec d$.}
\label{fig:sets}
\end{center}
\end{figure}

Let ${\vec d}_1$ and ${\vec d}_2$ be two directed straight lines not orthogonal to any line through two points of $P$, where the clockwise rotation that brings ${\vec d}_1$ to coincide with ${\vec d}_2$ is at most $180^\circ$. The {\em $({\vec d}_1,{\vec d}_2)$-partition} of $P$ partitions $P$ into subsets $P_a=P_{1}({\vec d}_1)\cap P_{1}({\vec d}_2)$, $P_b=P_{1}({\vec d}_1)\cap P_{2}({\vec d}_2)$, $P_c=P_{2}({\vec d}_1)\cap P_{1}({\vec d}_2)$, and
$P_d=P_{2}({\vec d}_1)\cap P_{2}({\vec d}_2)$. Note that every point in $P$ is contained in one of $P_a$, $P_b$, $P_c$, and $P_d$. A {\em $({\vec d}_1,{\vec d}_2)$-partition} of $P$ is {\em balanced} if $|P_a|+|P_d|\leq \frac{|P|}{2} +1$ and $|P_b|+|P_c|\leq \frac{|P|}{2} +1$. We now argue that, for every point set $P$, a balanced $({\vec d}_1,{\vec d}_2)$-partition of $P$ always exists, even if ${\vec d}_1$ is arbitrarily prescribed.

\begin{lemma} \label{le:partition}
Let $P$ be a convex point set and let ${\vec d}_1$ be a directed straight line not orthogonal to any line through two points of $P$. Then, there exists a directed straight line ${\vec d}_2$ that is not orthogonal to any line through two points of $P$ such that the $({\vec d}_1,{\vec d}_2)$-partition of $P$ is balanced.
\end{lemma}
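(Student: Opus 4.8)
The plan is to fix ${\vec d}_1$ and let ${\vec d}_2$ rotate continuously, tracking how the quantity $f = |P_a|+|P_d| - (|P_b|+|P_c|)$ changes, and then apply a discrete intermediate-value argument. First I would observe that $P_a \cup P_b = P_1({\vec d}_1)$ and $P_c \cup P_d = P_2({\vec d}_1)$ are fixed once ${\vec d}_1$ is chosen; rotating ${\vec d}_2$ only controls how each of these two fixed arcs of the convex hull is split by the ``$P_1({\vec d}_2)$ versus $P_2({\vec d}_2)$'' dichotomy. Since $|P_a|+|P_b|+|P_c|+|P_d| = |P| =: n$, showing the partition is balanced amounts to showing we can make $|P_a|+|P_d|$ lie in $[\,n/2 - 1,\ n/2 + 1\,]$ (the symmetric bound on $|P_b|+|P_c|$ then follows automatically), so it suffices to find ${\vec d}_2$ with $|f| \le 2$, i.e.\ essentially $f \in \{-1, 0, 1\}$ up to the slack the statement allows.

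Next I would set up the rotation explicitly. Let ${\vec d}_2(\alpha)$ be the line obtained from ${\vec d}_1$ by a clockwise rotation of angle $\alpha$, for $\alpha$ increasing from just above $0$ to just below $180^\circ$ (staying away from the finitely many ``bad'' angles where ${\vec d}_2$ is orthogonal to a line through two points of $P$, and also away from the finitely many angles where $p_a$ or $p_b$ would be ill-defined). At $\alpha \to 0^+$, ${\vec d}_2 \approx {\vec d}_1$, so $P_1({\vec d}_2) \approx P_1({\vec d}_1)$, giving $P_a \approx P_1({\vec d}_1)$, $P_d \approx P_2({\vec d}_1)$ and $P_b \approx P_c \approx \emptyset$; hence $f \approx n$. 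At $\alpha \to 180^{\circ -}$, ${\vec d}_2$ is nearly antiparallel to ${\vec d}_1$, so $p_a({\vec d}_2)$ and $p_b({\vec d}_2)$ swap roles relative to the ${\vec d}_1$-case, giving $P_1({\vec d}_2) \approx P_2({\vec d}_1)$, so now $P_a \approx P_b \approx \emptyset$ while $P_c \approx P_1({\vec d}_1)$, $P_d$... — more carefully, $|P_a|+|P_d|$ becomes $\approx 0$, so $f \approx -n$. The key structural fact is that as $\alpha$ increases past one bad angle, exactly one point of $P$ crosses between $P_1({\vec d}_2)$ and $P_2({\vec d}_2)$ (because ${\vec d}_2$ passes through the orthogonal direction of a single edge/pair, and convexity ensures the minimum and maximum points change one at a time), so each of $|P_a|$, $|P_b|$, $|P_c|$, $|P_d|$ changes by at most $1$ at each step, and consequently $f$ changes by at most $2$ at each step.

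Then I would finish with the discrete intermediate value theorem: $f$ starts near $+n$, ends near $-n$, and moves in steps of size at most $2$ — in fact the step is $\pm 2$ or $0$ depending on parity considerations, but at worst $|\Delta f| \le 2$ — so there is some admissible $\alpha^\star$ at which $|f(\alpha^\star)| \le 2$. Hmm, a step of size exactly $2$ could in principle jump from $f = 2$ to $f = -2$ wait, no: between consecutive bad angles $f$ is constant, and at a bad angle it changes by at most $2$, so the sequence of values is $n = f_0, f_1, \dots, f_m = -n$ with $|f_{i+1} - f_i| \le 2$; hence some $f_i \in \{-2,-1,0,1,2\}$. Wait — I should double check that $|f_i| \le 2$ really yields the claimed bounds. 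If $f_i = |P_a|+|P_d| - (|P_b|+|P_c|)$ and $|P_a|+|P_d| + |P_b|+|P_c| = n$, then $|P_a|+|P_d| = (n + f_i)/2 \le (n+2)/2 = n/2 + 1$ and symmetrically $|P_b|+|P_c| \le n/2+1$; so $f_i \in [-2,2]$ is exactly what is needed. Picking ${\vec d}_2 = {\vec d}_2(\alpha^\star)$ (which by construction avoids all orthogonality constraints) completes the proof.

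The main obstacle I anticipate is making the ``one point crosses at a time'' claim fully rigorous: one must argue that at a bad angle $\alpha$ — where ${\vec d}_2(\alpha)$ is orthogonal to the line through some pair $p, q \in P$ — the only membership changes among $P_a, P_b, P_c, P_d$ come from a single point (or from $p_a({\vec d}_2), p_b({\vec d}_2)$ swapping with a neighbour), and to handle the boundary bookkeeping in the definitions of $P_1, P_2$ (the half-open convention ``$p_a \in P_1$, $p_b \notin P_1$'') so that the count is off by at most one. A clean way is to choose coordinates so that no two points of $P$ share a ${\vec d}_2(\alpha)$-coordinate except exactly at bad angles, and to perturb ${\vec d}_2$ infinitesimally at each bad angle so that it is always in ``general position,'' reducing the analysis to: as ${\vec d}_2$ sweeps, the minimum point $p_a({\vec d}_2)$ (resp.\ maximum point $p_b({\vec d}_2)$) changes to an adjacent hull vertex one step at a time, and the arc $P_1({\vec d}_2)$ gains or loses its endpoint vertex accordingly. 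Everything else is routine.
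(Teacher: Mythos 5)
Your proposal follows essentially the same route as the paper: sweep $\vec d_2$ clockwise from $\vec d_1$ to its opposite, note that the partition only changes at the finitely many angles where $\vec d_2$ is orthogonal to a hull edge, and finish with a discrete intermediate-value argument. The one step that is not correct as stated is the claim that at each bad angle exactly one point crosses between $P_1(\vec d_2)$ and $P_2(\vec d_2)$, so that each of $|P_a|,|P_b|,|P_c|,|P_d|$ changes by at most one and your $f$ by at most $2$. If the hull edge currently being swept on the $P_1(\vec d_1)$ chain happens to be parallel to the one being swept on the $P_2(\vec d_1)$ chain, then the minimum and the maximum of $P$ with respect to $\vec d_2$ change at the very same angle: one point moves from $P_a$ to $P_b$ and, simultaneously, one from $P_d$ to $P_c$, so $|P_a|+|P_d|$ drops by $2$ and $f$ by $4$. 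Convexity does not exclude this (think of a square), and your proposed remedy---perturbing $\vec d_2$ into general position---cannot separate the two events, because they are tied to the single direction orthogonal to both parallel edges; this is precisely the simultaneous case the paper's proof handles explicitly.

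Fortunately the argument survives with the correct bound, so the flaw is repairable rather than fatal: $g=|P_a|+|P_d|$ decreases monotonically from $n$ to $0$ in steps of at most $2$, and the target window $n/2-1\le g\le n/2+1$ has width $2$, so at the first instant with $g\le n/2+1$ one also has $g>n/2-1$; equivalently, even a jump of $4$ in $f$ cannot leap over $\{-2,\dots,2\}$ coming from above. You should therefore replace ``one point crosses at a time'' by ``at most one point per chain crosses at a time'' and rerun your intermediate-value step with the weaker bound, which is exactly what the paper does.
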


\begin{proof}
Denote by $q_1=p_a({\vec d}_1),q_2,\dots,q_l,q_{l+1}=p_b({\vec d}_1)$ the points of $P$ encountered when clockwise walking on the boundary of ${\cal H}_P$ from $p_a({\vec d}_1)$ to $p_b({\vec d}_1)$. Also, denote by $r_1=p_b({\vec d}_1),r_2,\dots,r_m,r_{m+1}=p_a({\vec d}_1)$ the points of $P$ encountered when clockwise walking on the boundary of ${\cal H}_P$ from $p_b({\vec d}_1)$ to $p_a({\vec d}_1)$.

Initialize ${\vec d}_2$ to be a directed straight line coincident with ${\vec d}_1$. When ${\vec d}_2={\vec d}_1$, we have $P_a=\{q_1,q_2,\dots,q_l\}$, $P_d=\{r_1,r_2,\dots,r_m\}$, $P_b=\emptyset$, and $P_c=\emptyset$. We now clockwise rotate ${\vec d}_2$ until it is opposite to ${\vec d}_1$ (that is, parallel and pointing in the opposite direction). As we rotate ${\vec d}_2$, sets $P_{1}({\vec d}_2)$ and $P_{2}({\vec d}_2)$ change, hence sets $P_a$, $P_b$, $P_c$, and $P_d$ change as well. When ${\vec d}_2$ is opposite to ${\vec d}_1$, we have $P_a=\emptyset$, $P_d=\emptyset$, $P_b=\{q_1,q_2,\dots,q_l\}$, and $P_c=\{r_1,r_2,\dots,r_m\}$. We will argue that there is a moment during such a rotation of ${\vec d}_2$ in which the corresponding $({\vec d}_1,{\vec d}_2)$-partition of $P$ is balanced. Assume that at any time instant during the rotation of ${\vec d}_2$ the following hold (see Figs.~\ref{fig:rotating}(a)--(b)):

\begin{itemize}
\item $P_b=\{q_1,q_2,\dots,q_j\}$ (possibly $P_b$ is empty);
\item $P_a=\{q_{j+1},q_{j+2},\dots,q_l\}$ (possibly $P_a$ is empty);
\item $P_c=\{r_1,r_2,\dots,r_k\}$ (possibly $P_c$ is empty);
\item $P_d=\{r_{k+1},r_{k+2},\dots,r_m\}$ (possibly $P_d$ is empty); and
\item $q_{j+1}$ and $r_{k+1}$ are the minimum and maximum point of $P$ w.r.t. ${\vec d}_2$, respectively.
\end{itemize}

\begin{figure}[tb]
\begin{center}
\begin{tabular}{c c}
\mbox{\includegraphics[scale=0.3]{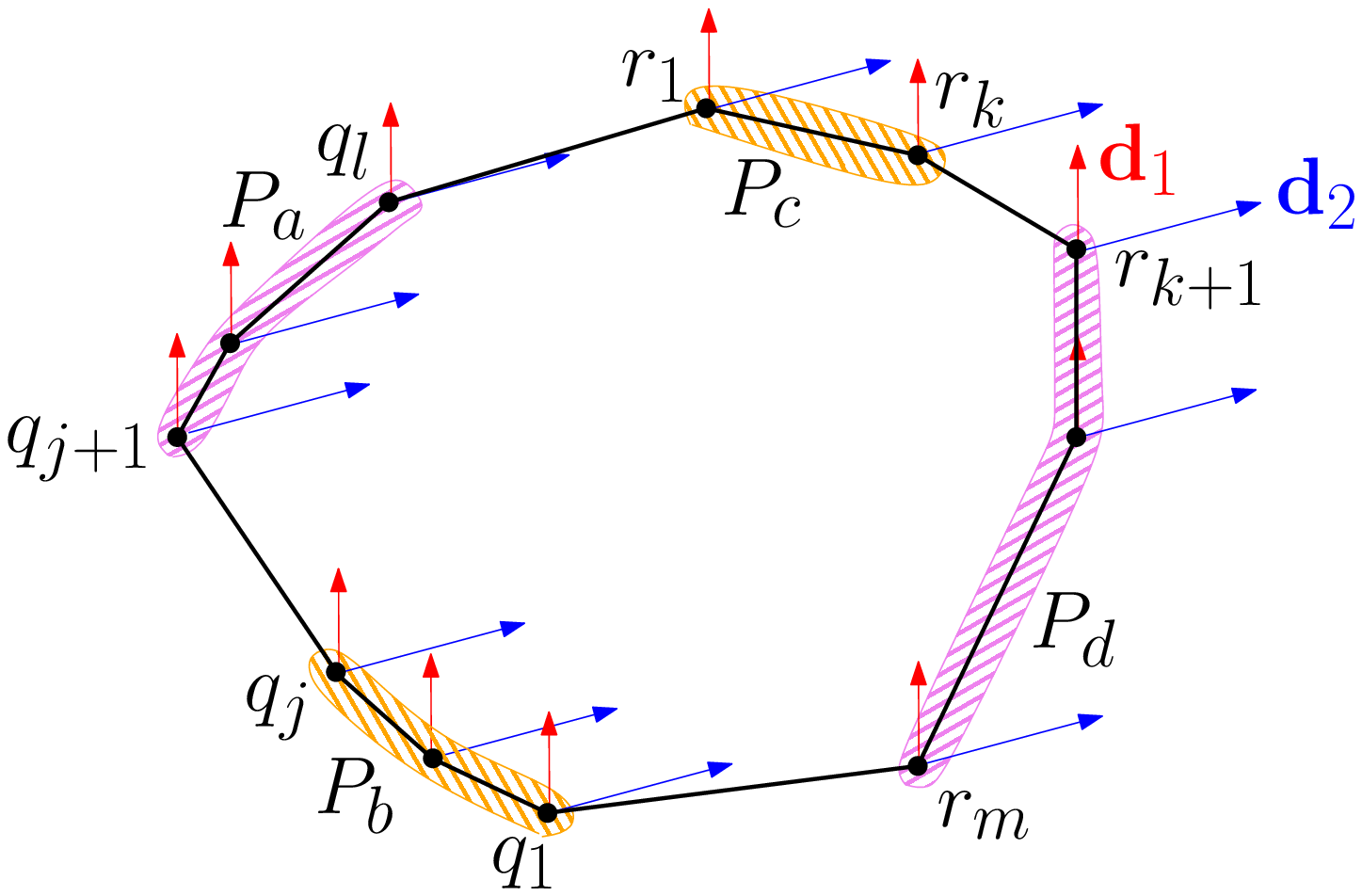}} & \hspace{5mm}
\mbox{\includegraphics[scale=0.35]{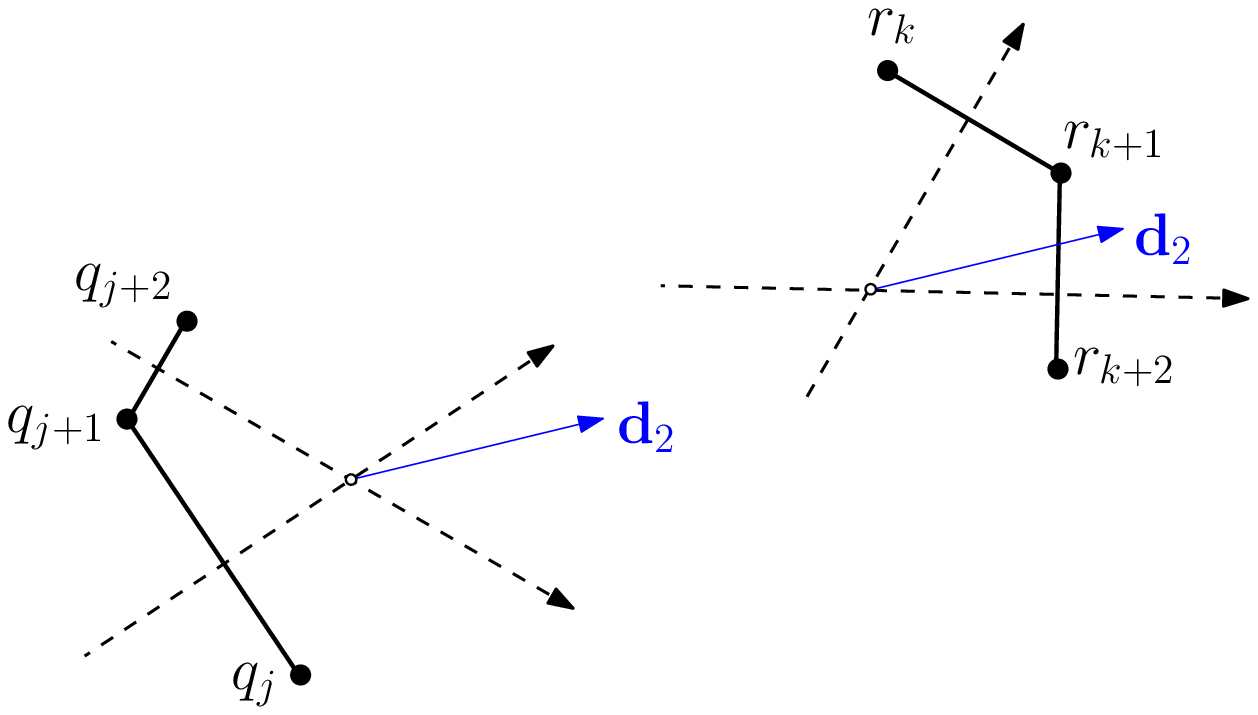}}\\
(a) \hspace{5mm} & \hspace{5mm} (b)
\end{tabular}
\caption{(a) Sets $P_a$, $P_b$, $P_c$, and $P_d$ at a certain time instant during the rotation of ${\vec d}_2$. (b) The slope of ${\vec d}_2$ with respect to the slopes of the lines orthogonal to $\overline{q_{j}q_{j+1}}$, to $\overline{q_{j+1}q_{j+2}}$, to $\overline{r_{k}r_{k+1}}$, and to $\overline{r_{k+1}r_{k+2}}$.}
\label{fig:rotating}
\end{center}
\end{figure}

The assumption is indeed true when ${\vec d}_2$ starts moving, with $j=0$ and $k=0$.

As we keep on clockwise rotating ${\vec d}_2$, at a certain moment ${\vec d}_2$ becomes orthogonal to $\overline{q_{j+1}q_{j+2}}$ or to $\overline{r_{k+1}r_{k+2}}$ (or to both if $\overline{q_{j+1}q_{j+2}}$ and $\overline{r_{k+1}r_{k+2}}$ are parallel). Thus, as we keep on clockwise rotating ${\vec d}_2$, sets $P_a$, $P_b$, $P_c$, and $P_d$ change. Namely:

If ${\vec d}_2$ becomes orthogonal first to $\overline{q_{j+1}q_{j+2}}$ and then to $\overline{r_{k+1}r_{k+2}}$, then as ${\vec d}_2$ rotates clockwise after the position in which it is orthogonal to $\overline{q_{j+1}q_{j+2}}$, we have

\begin{itemize}
\item $P_b=\{q_1,q_2,\dots,q_j,q_{j+1}\}$;
\item $P_a=\{q_{j+2},q_{j+3},\dots,q_l\}$ (possibly $P_a$ is empty);
\item $P_c=\{r_1,r_2,\dots,r_k\}$ (possibly $P_c$ is empty);
\item $P_d=\{r_{k+1},r_{k+2},\dots,r_m\}$ (possibly $P_d$ is empty); and
\item $q_{j+2}$ and $r_{k+1}$ are the minimum and maximum point of $P$ w.r.t. ${\vec d}_2$, respectively.
\end{itemize}


If ${\vec d}_2$ becomes orthogonal first to $\overline{r_{k+1}r_{k+2}}$ and then to $\overline{q_{j+1}q_{j+2}}$, then as ${\vec d}_2$ rotates clockwise after the position in which it is orthogonal to $\overline{r_{k+1}r_{k+2}}$, we have that $P_a$ and $P_b$ stay unchanged, that $r_{k+1}$ passes from $P_d$ to $P_c$, and that $q_{j+1}$ and $r_{k+2}$ are the minimum and maximum point of $P$ w.r.t. ${\vec d}_2$, respectively.


If ${\vec d}_2$ becomes orthogonal to $\overline{q_{j+1}q_{j+2}}$ and $\overline{r_{k+1}r_{k+2}}$ simultaneously, then as ${\vec d}_2$ rotates clockwise after the position in which it is orthogonal to  $\overline{q_{j+1}q_{j+2}}$, we have that $q_{j+1}$ passes from $P_a$ to $P_b$, that $r_{k+1}$ passes from $P_d$ to $P_c$, and that $q_{j+2}$ and $r_{k+2}$ are the minimum and maximum point of $P$ w.r.t. ${\vec d}_2$, respectively.

Observe that:
\begin{enumerate}
\item whenever sets $P_a$, $P_b$, $P_c$, and $P_d$ change, we have that $|P_a|+|P_d|$ and $|P_b|+|P_c|$ change at most by two;
\item when ${\vec d}_2$ starts rotating we have that $|P_a|+|P_d|=|P|$, and when ${\vec d}_2$ stops rotating we have that $|P_a|+|P_d|=0$;
\item when ${\vec d}_2$ starts rotating we have that  $|P_b|+|P_c|=0$, and when ${\vec d}_2$ stops rotating we have that $|P_b|+|P_c|=|P|$; and
\item $|P_a|+|P_b|+|P_c|+|P_d|=|P|$ holds at any time instant.
\end{enumerate}

By continuity, there is a time instant in which $|P_a|+|P_d|=\lfloor |P|/2\rfloor$ and $|P_b|+|P_c|=\lceil |P|/2 \rceil$, or in which $|P_a|+|P_d|=\lfloor |P|/2\rfloor +1$ and $|P_b|+|P_c|=\lceil |P|/2\rceil-1$. This completes the proof of the lemma.
\end{proof}

We now show how to use Lemma~\ref{le:partition} in order to prove Theorem~\ref{th:convex-non-planar}.

Let $P$ be any point set. Assume that no two points of $P$ have the same $y$-coordinate. Such a condition is easily met after rotating the Cartesian axes. Denote by $\vec l$ a vertical straight line directed towards increasing $y$-coordinates. Each of $P_1({\vec l})$ and $P_2({\vec l})$ is convex and one-sided with respect to $\vec l$. By Lemma~\ref{lemma:half_convex_planar_graph}, there exist increasing-chord graphs $G_1=(P_1({\vec l}),S_1)$ and $G_2=(P_2({\vec l}),S_2)$ with $|S_1|<2|P_1({\vec l})|$ and $|S_2|<2|P_2({\vec l})|$. Then, graph $G(P,S_1\cup S_2)$ has less than $2(|P_1({\vec l})|+|P_2({\vec l})|)=2|P|$ edges and contains an increasing-chord path between every pair of vertices in $P_1({\vec l})$ and between every pair of vertices in $P_2({\vec l})$. However, $G$ does not have increasing-chord paths between any pair $(a,b)$ of vertices such that $a\in P_1({\vec l})$ and $b\in P_2({\vec l})$.

We now present and prove the following claim. Consider a convex point set $Q$ and a directed straight line ${\vec d}_1$ not orthogonal to any line through two points of $Q$. Then, there exists a geometric graph $H(Q,R)$ that contains an increasing-chord path between every point in $Q_1({\vec d}_1)$ and every point in $Q_2({\vec d}_1)$, such that $|R|\in O(|Q|\log |Q|)$.

The application of the claim with $Q=P$ and ${\vec d}_1={\vec l}$ provides a graph $H(P,R)$ that contains an increasing-chord path between every pair $(a,b)$ of vertices such that $a\in P_1({\vec l})$ and $b\in P_2({\vec l})$. Thus, the union of $G$ and $H$ is an increasing-chord graph with $O(|P|\log |P|)$ edges spanning $P$. Therefore, the above claim implies Theorem~\ref{th:convex-non-planar}.

We show an inductive algorithm to construct $H$. Let $f(Q,{\vec d}_1)$ be the number of edges that $H$ has as a result of the application of our algorithm on a point set $Q$ and a directed straight-line ${\vec d}_1$. Also, let $f(n)=\max\{f(Q,{\vec d}_1)\}$, where the maximum is among all point sets $Q$ with $n=|Q|$ points and among all the directed straight-lines ${\vec d}_1$ that are not orthogonal to any line through two points of $Q$.

Let $Q$ be any convex point set with $n$ points and let ${\vec d}_1$ be any directed straight line not orthogonal to any line through two points of $Q$. By Lemma~\ref{le:partition}, there exists a directed straight line not orthogonal to any line through two points of $Q$ and such that the $({\vec d}_1,{\vec d}_2)$-partition of $Q$ is balanced.

Let $Q_{a}=Q_{1}({\vec d}_1)\cap Q_{1}({\vec d}_2)$, let $Q_{b}=Q_{1}({\vec d}_1)\cap Q_{2}({\vec d}_2)$, let $Q_{c}=Q_{2}({\vec d}_1)\cap Q_{1}({\vec d}_2)$, and let $Q_{d}=Q_{2}({\vec d}_1)\cap Q_{2}({\vec d}_2)$.

Point set $Q_a\cup Q_c$ is convex and one-sided with respect to ${\vec d}_2$. By Lemma~\ref{lemma:half_convex_planar_graph} there exists an increasing-chord graph $H_1(Q_a\cup Q_c,R_1)$ with $|R_1|< 2(|Q_a|+|Q_c|)$ edges. Analogously, by Lemma~\ref{lemma:half_convex_planar_graph} there exists an increasing-chord graph $H_2(Q_b\cup Q_d,R_2)$ with $|R_2|<2(|Q_b|+|Q_d|)$ edges.

Hence, there exists a graph $H_3(Q,R_1\cup R_2)$ with $|R_1\cup R_2|< 2(|Q_a|+|Q_c|+|Q_b|+|Q_d|)=2|Q|=2n$ edges containing an increasing-chord path between every point in $Q_a$ and every point in $Q_c$, and between every point in $Q_b$ and every point in $Q_d$. However, $G$ does not have an increasing-chord path between any point in $Q_a$ and any point in $Q_d$, and does not have an increasing-chord path between any point in $Q_b$ and any point in $Q_c$.

By Lemma~\ref{le:partition}, it holds that $|Q_a|+|Q_d|\leq \frac{n}{2}+1$ and $|Q_b|+|Q_d|\leq \frac{n}{2}+1$. By definition, we have $f(Q_a\cup Q_d,{\vec d}_1)\leq f(|Q_a|+|Q_d|)\leq f(\frac{n}{2}+1)$. Analogously, it holds that $f(Q_b\cup Q_c,{\vec d}_1)\leq f(|Q_b|+|Q_c|)\leq f(\frac{n}{2}+1)$. Hence, $f(n)\leq 2n+2f(\frac{n}{2}+1)\in O(n\log n)$. This proves the claim and hence Theorem~\ref{th:convex-non-planar}.

\section{Conclusions}

We considered the problem of constructing increasing-chord graphs spanning point sets. We proved that, for every point set $P$, there exists a planar increasing-chord graph $G(P',S)$ with $P\subseteq P'$ and $|P'|\in O(|P|)$. We also proved that, for every convex point set $P$, there exists an increasing-chord graph $G(P,S)$ with $|S|\in O(|P| \log |P|)$.

Despite our research efforts, the main question on this topic remains open:

\begin{problem}
Is it true that, for every (convex) point set $P$, there exists an increasing-chord planar graph $G(P,S)$?
\end{problem}

One of the directions we took in order to tackle this problem is to assume that the points in $P$ lie on a constant number of straight lines. While a simple modification of the proof of Lemma~\ref{lemma:half_convex_planar_graph} allows us to prove that an increasing-chord planar graph always exists spanning a set of points lying on two straight lines, it is surprising and disheartening that we could not prove a similar result for sets of points lying on three straight lines. The main difficulty seems to lie in the construction of planar increasing-chord graphs spanning sets of points lying on the boundary of an acute triangle.

Gabriel graphs naturally generalize to higher dimensions, where empty balls replace empty disks. In Section~\ref{se:steiner} we showed that, for points in $\mathbb R^2$, every Gabriel triangulation is increasing-chord. Can this result be generalized to higher dimensions?

\begin{problem}
Is it true that, for every point set $P$ in $\mathbb R^d$, any Gabriel triangulation of $P$ is increasing-chord?\end{problem}

Finally, it would be interesting to understand if increasing-chord graphs with few edges can be constructed for any (possibly non-convex) point set:

\begin{problem}
Is it true that, for every point set $P$, there exists an increasing-chord graph $G(P,S)$ with $|S|\in o(|P|^2)$?
\end{problem}

\bibliographystyle{splncs_srt}
\bibliography{main}

\end{document}